\theoremstyle{plain}
\newtheorem{theorem}{Theorem}
\newtheorem{remark}[theorem]{Remark}
\newtheorem{lemma}[theorem]{Lemma}
\newtheorem{proposition}[theorem]{Proposition}
\newtheorem{definition}[theorem]{Definition}
\newcommand{\ii}{\mathrm{i}}
\title{Classical and quantum controllability of a rotating symmetric molecule}
\begin{document}
\author{
Ugo Boscain\footnote{CNRS, Sorbonne Universit\'{e}, Inria, Universit\'{e} de Paris, Laboratoire Jacques-Louis Lions, Paris, France.
              (ugo.boscain@upmc.fr)  }        \and
         Eugenio~Pozzoli\footnote{Inria, Sorbonne Universit\'e, Universit\'e de Paris, CNRS, Laboratoire Jacques-Louis Lions, Paris, France(eugenio.pozzoli@inria.fr).}      \and
    Mario Sigalotti\footnote{Inria, Sorbonne Universit\'e, Universit\'e de Paris, CNRS, Laboratoire Jacques-Louis Lions, Paris, France (mario.sigalotti@inria.fr).     }}
         
         \maketitle
\begin{abstract}
 In this paper we study the controllability problem for a symmetric-top molecule, both for its classical and quantum rotational dynamics. 
The molecule is controlled through
 three orthogonal 
 electric fields 
 interacting with its electric dipole.
 We characterize the controllability in terms of the dipole position: when it lies along the symmetry axis of the molecule neither the classical nor the quantum dynamics are controllable, due to the presence of a conserved quantity, the third component of the total angular momentum; when it lies in the orthogonal plane to the symmetry axis, a quantum symmetry arises, due to the superposition of symmetric states, which has no classical counterpart. If the dipole is neither along the symmetry axis nor orthogonal to it, controllability for the classical dynamics and approximate controllability for the quantum dynamics are proved to hold. 
 The approximate controllability of the symmetric-top Schr\"odinger equation is established by  using a Lie--Galerkin method, based on block-wise approximations of the infinite-dimensional systems.
\end{abstract}
\textbf{Keywords:} Quantum control, Schr\"odinger equation, rotational dynamics, symmetric-top molecule, bilinear control systems, Euler equation

\section{Introduction}
The control of molecular dynamics takes an important role in quantum physics and chemistry because of the variety of its applications, 
starting from well-established ones such as rotational state-selective excitation of chiral molecules (\cite{perez,sandra}), and going further to applications in quantum information (\cite{yu}). For a general overview of controlled molecular dynamics one can see, for example, \cite{koch}.  

Rotations can, in general, couple to vibrations in the so-called ro-vibrational states. In our mathematical analysis, however, we shall restrict ourselves to the rotational states of the molecule. Due to its discrete quantization, molecular dynamics 
perfectly fits the mathematical quantum control theory which has been established until now. In fact, the control of 
the Schr{\"o}dinger equation has attracted substantial interest in the last 15 years
(see 
\cite{Altafini,Coron,
BGRS,
Glaser2015,Keyl,nersesyan} and references therein). 
Rigid molecules are subject to the classification of rigid rotors in terms of their inertia moments $I_1\leq I_2\leq I_3$: one distinghuishes asymmetric-tops ($I_1<I_2<I_3$), prolate symmetric-tops ($I_1<I_2=I_3$), oblate symmetric-tops ($I_1=I_2<I_3$), spherical-tops ($I_1=I_2=I_3$), and linear-tops ($I_1=0,\,I_2=I_3$).

The problem of controlling the rotational dynamics of a planar molecule 
 by means of two orthogonal 
 electric fields
has been analyzed in \cite{BCCS}, where 
approximate controllability has been proved using 
a suitable non-resonance property of
the spectrum of the rotational Hamiltonian. 
In \cite{BCS} the approximate controllability of a linear-top controlled by three orthogonal 
electric fields has been established. There, a new sufficient condition for controllability, called the Lie--Galerkin tracking condition, has been introduced in an abstract framework, and applied to the linear-top system.

Here, we study the symmetric-top (prolate, oblate, or spherical)  as a generalization of the linear one, characterizing its controllability in terms of the position of its electric dipole moment.
While for the linear-top two quantum numbers $j,m$ are needed to describe the motion, the main and more evident difference here is the presence of a third quantum number $k$, which classically represents the projection of the total angular momentum on the symmetry axis of the molecule. This should not be a surprise, since the configuration space of a linear-top is the $2$-sphere $S^2$, while the symmetric-top evolves on the Lie group ${\rm SO}(3)$, a three-dimensional manifold. As a matter of fact, by fixing $k=0$, one recovers the linear-top as a subsystem inside the symmetric-top. It is worth mentioning that the general theory developed in \cite{BCCS,BCMS,nersesyan} is based on non-resonance conditions on the spectrum of the internal Hamiltonian. A major difficulty in studying the controllability properties of the rotational dynamics is that, even in the case of the linear-top, the spectrum of the rotational Hamiltonian has severe degeneracies at the so-called $m$-levels. The symmetric-top is even more degenerate, due to the additional presence of the so-called $k$-levels. 

The Schr{\"o}dinger equation for a rotating molecule controlled by three orthogonal
 electric fields reads
\[
\ii\dfrac{\partial}{\partial t} \psi(R,t)= H\psi(R,t)+\sum_{l=1}^3u_l(t)B_l(R,\delta)\psi(R,t), \quad \psi(\cdot,t) \in L^2({\rm SO}(3)),
\]
where $H=\frac{1}{2}\Big(\frac{P_1^2}{I_1}+\frac{P_2^2}{I_2}+\frac{P_3^2}{I_3}\Big)$ is the rotational Hamiltonian, $I_1,I_2,I_3$ are the moments of inertia of the molecule, $P_1,P_2,P_3$ are the angular momentum differential operators, and $B_i(R,\delta)=-\langle R \delta, e_i\rangle$ is the interaction Hamiltonian between the dipole moment $\delta$ of the molecule and the direction $e_i$, $i=1,2,3$. Finally, $R \in {\rm SO}(3)$ is the matrix which describes the configuration of the molecule in the space.

\begin{figure}[ht!]
\subfigure[]{
\includegraphics[width=0.3\linewidth, draft = false]{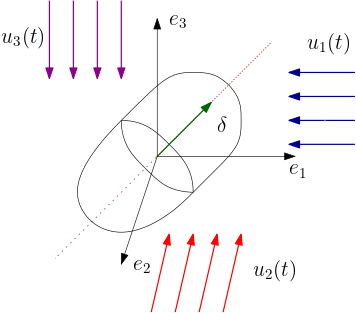} \label{top1} }
\subfigure[]{
\includegraphics[width=0.3\linewidth, draft = false]{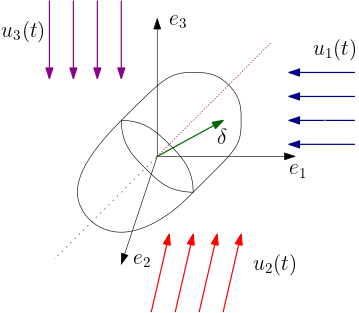} \label{top2} }
\subfigure[]{
\includegraphics[width=0.3\linewidth, draft = false]{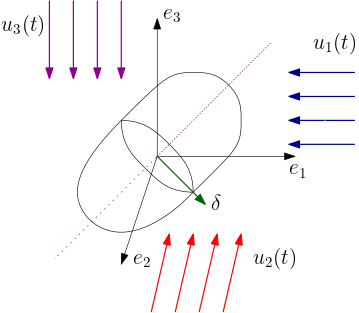} \label{top3} }
\label{top}
\end{figure}

We shall study the symmetric-top, and set $I_1=I_2$. Anyway, our analysis does not depend on whether $I_3\geq I_2$ or $I_3\leq I_2$, so we are actually treating in this way both the cases of a prolate or oblate symmetric-top.
The principal axis of inertia with associated inertia moment $I_3$ is then called \emph{symmetry axis} of the molecule. The position of the electric dipole with respect to the symmetry axis plays a crucial role in our controllability analysis: a symmetric molecule with electric dipole collinear to the symmetry axis will be called \emph{genuine}, otherwise it will be called \emph{accidental} (\cite[Section 2.6]{gordy}). 
Most symmetric molecules present in nature are genuine. Nevertheless, it can happen that two moments of inertia of a real molecule are almost equal, by ``accident", although the molecule does not possess a $n$-fold axis of symmetry with $n\geq 3$\footnote{ The existence of a $n$-fold axis of symmetry (i.e., an axis such that a rotation of angle $2\pi/n$ about it leaves unchanged the distribution of atoms in the space) with $n\geq 3$, implies that the top is genuine symmetric.}For instance,  the inertia moments of the 
chiral molecule HSOH are $I_1\sim I_2\ll I_3$, 
while its dipole components are $\delta_1>\delta_2=\delta_1/2\gg \delta_3\neq 0$ (\cite{WinnewisserCEJ2003}). Such slightly asymmetric-tops 
are often studied in chemistry and physics in their symmetric-top approximations (see, e.g., \cite{WinnewisserCEJ2003},\cite[Section 3.4]{gordy}), which correspond in general to accidentally symmetric-tops. In this case, closed expression for the spectrum and the eigenfunctions of $H$ are known. The case of the asymmetric-top goes beyond the scope of this paper, but we remark  that accidentally symmetric-tops may be used to obtain controllability of asymmetric-tops with a perturbative approach.
The idea of studying the controllability of quantum systems in general configurations starting from symmetric cases (even if the latter have more degeneracies) has already been exploited, e.g., in \cite{panati,mehats}.

The position of the dipole moment turns out to play a decisive role: when it is neither along the symmetry axis, nor orthogonal to it, as in 
Figure~\ref{top}\subref{top2}, then approximate controllability holds, under some non-resonance conditions, as it is stated in Theorem~\ref{rare}. To prove it, we introduce in Theorem \ref{LGTC} 
a new controllability test for the multi-input Schr{\"o}dinger equation, 
closely related to the Lie--Galerkin tracking condition.
We then apply this result to the symmetric-top system.
 The control strategy is based on the excitation of the system with external fields in resonance with three families of frequencies, corresponding to internal spectral gaps. 
One frequency is used to overcome the $m$-degeneracy in the spectrum, and this step is quite similar to the proof of the linear-top approximate controllability (Appendix~\ref{appendixA}). The other two frequencies are used in a next step to break the $k$-degeneracy, in a three-wave mixing scheme (Appendix~\ref{appendixB}) typically used in quantum chemistry to obtain enantio- and state-selectivity for chiral molecules (\cite{AGGT,GKL,YY}).

The two dipole configurations to which Theorem \ref{rare} does not apply are extremely relevant from the physical point of view. Indeed, the dipole moment of a symmetric-top lies usually along its symmetry axis (Figure~\ref{top}\subref{top1}), and if not, for accidentally symmetric-tops, it is often found in the orthogonal plane (Figure~\ref{top}\subref{top3}). Here two different symmetries arise, implying the non-controllability of these systems, as we prove, respectively, in Theorems~\ref{genuine} and \ref{accidentally}.
These two conserved quantities stimulated and motivated the study of the classical dynamics of the symmetric-top, presented in the first part of the paper: the first conserved quantity, appearing in Theorem~\ref{genuine}, corresponds to a classical observable, that is, the component of the angular momentum along the symmetry axis, and it turns out to be a first integral also for the classical controlled dynamics, as remarked in Theorem~\ref{genuinecla}. 
The second conserved quantity, appearing in Theorem~\ref{accidentally}, is more challenging, because it does not have a counterpart in the classical dynamics, being mainly due to the superposition of $k$ and $-k$ states in the quantum dynamics. We show that this position of the dipole still corresponds to a controllable system for the classical-top, while it does not for the quantum-top. Thus, the latter is an example of a system whose quantum dynamics are not controllable even though the classical dynamics are. The possible discrepancy between quantum and classical controllability has been already noticed, for example, in the harmonic oscillator dynamics (\cite{Mirra}).
 It should be noticed that the classical dynamics of a rigid body controlled with external torques (e.g., opposite pairs of gas jets) or internal torques (momentum exchange devices such as wheels) as studied in the literature (see, e.g., \cite[Section 6.4]{AS},
  \cite{krishna},\cite{crouch}, \cite[Section 4.6]{Jurdje}) differ from the ones considered here, where  the controlled fields (i.e., the interaction between 
 the 
 electric field and the electric dipole) are not left-invariant and their action depends on the configuration of the rigid body in the space.

The paper is organized as follows: in Section \ref{classical} we study the controllability of the classical Hamilton equations for a symmetric-top. The main results are 
Theorems \ref{genuinecla} and \ref{accidentallycla}, where we prove, respectively, the non-controllability when the dipole lies along the symmetry axis of the body and the controllability in any other case. In Section \ref{quantum} we study the controllability of the Schr{\"o}dinger equation for a symmetric-top. The main controllability result is Theorem \ref{rare}, where we prove the approximate controllability when the dipole is neither along the symmetry axis, nor orthogonal to it. In the two cases left, we prove the non-controllability in Theorems~\ref{genuine} and \ref{accidentally}.

\section{Classical symmetric-top molecule}\label{classical}

\subsection{Controllability of control-affine systems with recurrent drift}
We recall in this section some useful results on the controllability properties of (finite-dimensional) control-affine systems.

Let $M$ be an $n$-dimensional manifold, $X_0,X_1,\dots,X_\ell$ a family of smooth (i.e., $C^\infty$) vector fields on $M$, 
$U\subset \mathbb{R}^\ell$ a set of control values which is 
a neighborhood of the origin. 
We consider the control system
\begin{equation}\label{control}
\dot{q}=X_0(q)+\sum_{i=1}^\ell u_i(t)X_i(q), \qquad q \in M,
\end{equation} 
where the control functions $u$ are taken in $L^\infty(\mathbb{R},U)$. The vector field $X_0$ is called the \emph{drift}.
The \emph{reachable set} from $q_0\in M$ is
\begin{align*}
 \mathrm{Reach}(q_0):= &\{q \in M \mid \exists \;  u,T \text{ s.t. the solution to (\ref{control}) with } q(0)=q_0 \\ & \text{ satisfies } q(T)=q \}. 
 \end{align*}
System (\ref{control}) is said to be \emph{controllable} if $\mathrm{Reach}(q_0)=M$ for all $q_0\in M$.

The family of vector fields $X_0,X_1,\dots,X_\ell$ is said to be \emph{Lie bracket generating} if 
$$\dim(\mathrm{Lie}_q\{X_0,X_1,\dots,X_\ell \})=n $$ 
for all $q\in M$, where $\mathrm{Lie}_q\{X_0,X_1,\dots,X_\ell \}$ denotes the evaluation at $q$ of the Lie algebra generated by $X_0,X_1,\dots,X_\ell$.

The following is a basic result in geometric control theory (see, for example, \cite[Section 4.6]{Jurdje}). Recall that a complete vector field $X$ on $M$
is said to be \emph{recurrent} if for every open nonempty subset $V$ of $M$ and every time $t>0$, there exists $\tilde{t}>t$ such that $\phi_{\tilde{t}}(V)\cap V \neq \emptyset$, where $\phi_{\tilde{t}}$ denotes the flow of $X$ at time ${\tilde{t}}$.

\begin{theorem}\label{basic}
Let $U\subset \mathbb{R}^m$ be 
a neighborhood of the origin. 
If $X_0$ is recurrent and the family $X_0,X_1,\dots,X_{\ell}$ is Lie bracket generating, then system (\ref{control}) is controllable.
\end{theorem}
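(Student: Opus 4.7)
The plan is to extract accessibility from the Lie bracket generating condition and then upgrade it to controllability using the recurrence of the drift, following the classical strategy described in \cite[Section 4.6]{Jurdje}.

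First, by the Lie bracket generating property and a standard application of Krener's accessibility theorem, for every $q\in M$ and every $T>0$ the set $\mathrm{Reach}_{\leq T}(q)$ has nonempty interior in $M$. Applying the same argument to the time-reversed family $\{-X_0,-X_1,\dots,-X_\ell\}$, which spans the same Lie algebra at every point, shows that the backward reachable set $\mathrm{Reach}^{-}(p):=\{q\in M:p\in\mathrm{Reach}(q)\}$ also has nonempty interior for every $p\in M$. I would then observe that $\mathrm{Reach}(q_0)$ is itself open: for any $q\in \mathrm{Reach}(q_0)$, small perturbations of the control near the end of the trajectory, combined with the full-rank Lie algebra at $q$, sweep out a neighborhood of $q$.

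Fix now an arbitrary $p\in M$. It suffices to produce a trajectory of (\ref{control}) connecting some open set $V\subset\mathrm{Reach}(q_0)$ to some open set $W\subset\mathrm{Reach}^{-}(p)$, since concatenation would then yield $p\in\mathrm{Reach}(q_0)$. This is where the recurrence of $X_0$ enters essentially: the set $\mathrm{Reach}(q_0)$ is forward-invariant under the drift flow $\phi_t$ (one can always choose $u\equiv 0$), and the recurrence of $\phi_t$ means that long forward drift arcs applied to $V$ return close to $V$, so by interleaving them with short maneuvers along $X_1,\dots,X_\ell$ (afforded by the full-rank accessibility Lie algebra) one can effectively approximate backward drift motion and thereby bridge $V$ to $W$.

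The step I expect to be the main obstacle is making this heuristic rigorous: the recurrence condition only guarantees \emph{some} return to an open set, not a controlled return to a prescribed point, so one has to combine the topological recurrence with the freedom in the control directions to close the loop. The standard resolution is a topological argument showing that $\mathrm{Reach}(q_0)$ is both open and closed, so that, by connectedness of $M$ (which is implicit in the Lie bracket generating hypothesis since otherwise one works component by component), it coincides with $M$.
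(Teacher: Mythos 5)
Your outline follows the classical Lobry--Jurdjevic strategy that the paper itself invokes (it gives no proof, referring to \cite[Section 4.6]{Jurdje}), but two of your steps do not stand as written. The claim that $\mathrm{Reach}(q_0)$ is open does not follow from the Lie bracket generating condition: Krener's theorem only gives that the reachable set has nonempty interior and is contained in the closure of its interior; for systems with drift the endpoint of an admissible trajectory may perfectly well lie on the boundary of the reachable set. For instance, for $\dot x=1$, $\dot y=u$, $|u|\le 1$ on $\mathbb{R}^2$ the family is bracket generating, yet the reachable set from the origin is the cone $\{(x,y)\mid x\ge 0,\ |y|\le x\}$, which contains its boundary rays and is not open. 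Openness of $\mathrm{Reach}(q_0)$ is true only a posteriori, once controllability is proved, so it cannot serve as a lemma; the same goes for closedness, which your final ``open and closed plus connectedness'' argument would need, so that proposed resolution cannot close the proof.

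The step you flag as the main obstacle is indeed where the content of the theorem lies, and it is resolved not by a topological open--closed argument but by a density statement. The key lemma is: if $X_0$ is recurrent, then $\overline{\mathrm{Reach}(q)}$ is invariant under the backward drift flow, i.e.\ $\phi_{-t}(p)\in\overline{\mathrm{Reach}(q)}$ for every $p\in\overline{\mathrm{Reach}(q)}$ and $t>0$; its proof is exactly your heuristic made precise, using Krener's refinement that every reachable point lies in the closure of $\mathrm{int}\,\mathrm{Reach}(q)$, so that the recurrence of $X_0$ can be applied to an open set of reachable points carried backward by the drift. Since $U$ is a neighborhood of the origin, a standard compatibility (convex cone) argument then shows that $\overline{\mathrm{Reach}(q_0)}$ contains the attainable set of the symmetric family $\{\pm X_0,\pm X_1,\dots,\pm X_\ell\}$, which is all of $M$ by Chow--Rashevskii; hence $\mathrm{Reach}(q_0)$ is dense in $M$ (see \cite[Chapter 8]{AS}). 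The conclusion is then the concatenation you set up: by Krener applied to the time-reversed system, $\mathrm{Reach}^{-}(p)$ has nonempty interior, density forces $\mathrm{Reach}(q_0)\cap\mathrm{int}\,\mathrm{Reach}^{-}(p)\neq\emptyset$, and therefore $p\in\mathrm{Reach}(q_0)$. As written, your proposal is missing precisely this density argument, which is what the recurrence hypothesis buys; without it the bridge from $V$ to $W$ remains a heuristic.
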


A useful test to check that the Lie bracket generating condition holds true
is given by the following simple lemma, whose proof is given for completeness.
\begin{lemma}\label{lemmino}
If the family of analytic vector fields $X_0,X_1,\dots,X_{\ell}$ is Lie bracket generating on the complement of a subset
$N\subset M$ and $\mathrm{Reach}(q)\not\subset N$, for all $q\in N$, then the family is Lie bracket generating on $M$.
\end{lemma}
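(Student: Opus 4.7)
The plan is to use the Hermann--Nagano orbit theorem for analytic vector fields, which asserts that the orbit of a family of analytic vector fields through a point $p$ is an immersed analytic submanifold whose tangent space at $p$ coincides with $\mathrm{Lie}_p\{X_0,\dots,X_\ell\}$. In particular, the function $p\mapsto \dim(\mathrm{Lie}_p\{X_0,\dots,X_\ell\})$ is constant along every orbit of the family. This is the workhorse; analyticity is essential (a merely smooth counterexample is available), and it is what lets information transfer from the generic set $M\setminus N$ to the bad set $N$.

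Fix an arbitrary $q\in N$; the goal is to show $\dim(\mathrm{Lie}_q\{X_0,\dots,X_\ell\})=n$. By hypothesis, $\mathrm{Reach}(q)\not\subset N$, so there is a point $q'\in \mathrm{Reach}(q)\setminus N$. Since any point reachable from $q$ by a forward-time trajectory of \eqref{control} is in particular obtained by composing flows of the vector fields $X_0,X_1,\dots,X_\ell$ (forward in time, with constant controls on each subinterval), the reachable set is contained in the orbit of $q$ under the group generated by the flows of $X_0,X_1,\dots,X_\ell$. Hence $q$ and $q'$ lie on the same orbit.

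By the orbit theorem applied at $q'\in M\setminus N$, where the family is Lie bracket generating by hypothesis, the orbit through $q'$ has dimension $n$, i.e.\ it is open in $M$; the same holds for the orbit through $q$ since it coincides with the orbit through $q'$. Applying the orbit theorem once more, now at $q$, yields $\dim(\mathrm{Lie}_q\{X_0,\dots,X_\ell\})=\dim(\text{orbit through }q)=n$, which is the desired conclusion. Since $q\in N$ was arbitrary and the property holds on $M\setminus N$ by assumption, the family is Lie bracket generating on all of $M$.

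The only potential subtlety is the set-theoretic inclusion $\mathrm{Reach}(q)\subset\mathrm{Orbit}(q)$, which is straightforward but must be stated carefully: the controls take values in a neighborhood of the origin, so piecewise-constant approximation of any admissible control shows that any reachable point is a finite composition of time-$t$ flows of vector fields of the form $X_0+\sum_i u_i X_i$ with $u\in U$, each of which is a (forward) flow in the group generated by $\{X_0,X_1,\dots,X_\ell\}$. No other delicate point arises.
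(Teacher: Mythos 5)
Your argument is correct and follows essentially the same route as the paper's proof: both rest on the analytic (Hermann--Nagano) orbit theorem, which makes $p\mapsto\dim\bigl(\mathrm{Lie}_p\{X_0,\dots,X_\ell\}\bigr)$ constant along orbits, together with the fact that a reachable point $q_1\notin N$ lies on the same orbit as $q$. The only wording to correct is the claim that the flow of $X_0+\sum_i u_iX_i$ belongs to the group generated by the flows of $X_0,\dots,X_\ell$ (it need not be a finite composition of such flows); the clean fix is to take the orbit of the family $\{X_0+\sum_i u_iX_i \mid u\in U\}$, which contains $\mathrm{Reach}(q)$ by definition and generates the same Lie algebra as $\{X_0,\dots,X_\ell\}$ since $U$ is a neighborhood of the origin.
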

\begin{proof}
Let $q\in N$ and $q_1\in \mathrm{Reach}(q)\setminus N$.
By the Orbit theorem applied to the case of analytic vector fields (see, e.g., \cite[Chapter 5]{AS}) the dimension of 
$\mathrm{Lie}_q\{X_0,X_1,\dots,X_{\ell} \}$ and $\mathrm{Lie}_{q_1}\{X_0,X_1,\dots,X_{\ell} \}$ coincide. By assumption the latter is equal to $n$, which implies that the same is true for the former.
\end{proof}

\subsection{The classical dynamics of a molecule subject to electric fields}
Since the translational motion (of the center of mass) of a rigid body is decoupled from the rotational motion, we shall assume that the molecule can only rotate around its center of mass. In detail, for any vector $v\in \mathbb{R}^3$, denoting by $e_1,e_2,e_3$ a fixed orthonormal frame of $\mathbb{R}^3$ and by $a_1,a_2,a_3$ a moving orthonormal frame with the same orientation, both attached to the rigid body's center of mass, the configuration of the molecule is identified with the unique $g \in {\rm SO}(3)$ such that $g\;(x,y,z)^T=(X,Y,Z)^T$, where $(x,y,z)$ are the coordinates of $v$ with respect to $a_1,a_2,a_3$, and $(X,Y,Z)$ are the coordinates of $v$ with respect to $e_1,e_2,e_3$. In order to describe the equations on the tangent bundle ${\rm SO}(3)\times \mathfrak{so}(3)$, 
we shall make use of the isomorphism of Lie algebras
\[
A:(\mathbb{R}^3,\times) \rightarrow (\mathfrak{so(3)},[\cdot,\cdot]), \quad P=
\begin{pmatrix}
P_1 \\
P_2\\
P_3
\end{pmatrix} \mapsto  A(P)=
\begin{pmatrix}
0 & -P_3 & P_2\\
P_3 & 0 & -P_1\\
-P_2 & P_1 & 0
\end{pmatrix}
\]
where $\times$ is the vector product. As external forces to control the rotation of the molecule, we consider three orthogonal
electric fields with intensities $u_1(t)$, $u_2(t)$, $u_3(t)$ and directions $e_1,e_2,e_3$.
We assume that 
$$(u_1,u_2,u_3)\in U\subset \mathbb{R}^3, \quad  (0,0,0)\in {\rm Interior}(U), $$
that is, the set $U\subset \mathbb{R}^3$ of admissible values
for the triple $(u_1,u_2,u_3)$ is 
a neighborhood of the origin.
 Denoting by $\delta$ the dipole of the molecule written in the moving frame, the three forces due to the interaction with the electric fields are
$
u_i(t)(g^{-1}(t)e_i)\times \delta$, $i=1,2,3.$
Then, the equations for the classical rotational dynamics of a molecule with inertia moments $I_1,I_2,I_3$ controlled with electric fields read
\begin{equation}\label{euler1}
\begin{pmatrix}
\dot{g} \\ 
\dot{P}
\end{pmatrix}=X(g,P)+\sum_{i=1}^3u_i(t)Y_i(g,P), \quad (g,P)\in {\rm SO}(3) \times \mathbb{R}^3,\ u\in U,
\end{equation}
where
 \begin{equation}\label{fields}
X(g,P):=\begin{pmatrix}
gA(\beta P) \\
P \times (\beta P)
\end{pmatrix}, \quad Y_i(g,P):=\begin{pmatrix}
0\\
(g^{-1}e_i) \times \delta
\end{pmatrix}, \quad i=1,2,3,
\end{equation}
and $P=(P_1,P_2,P_3)^T,\;\beta P=(P_1/I_1,P_2/I_2,P_3/I_3)^T$. Similarly to
\cite[Section 12.2]{Jurdje} (where this is done for the heavy rigid body), these equations can be derived as Hamilton equations corresponding to the Hamiltonian
$$H=\frac{1}{2}\left(\frac{P_1^2}{I_1}+\frac{P_2^2}{I_2}+\frac{P_3^2}{I_3}\right)+V(g),\quad  V(g)=-\sum_{i=1}^3u_i\langle(g^{-1}e_i), \delta\rangle $$
on ${\rm SO}(3)\times \mathbb{R}^3$.
System \eqref{euler1} can be seen as a control-affine system with $\ell=3$ 
controlled fields. 

Rotating molecule dynamics can also be represented in terms of quaternions, lifting the dynamics from  ${\rm SO}(3)$ to  the $3$-sphere $S^3$, as follows.
We 
denote by $\mathbb{H}$ the space of quaternions and we
identify $S^3\subset \mathbb{R}^4$ with $\{q_0+\ii q_1+\mathrm{j}q_2+\mathrm{k}q_3\in \mathbb{H} \mid q_0^2+q_1^2+q_2^2+q_3^2=1\}$. We also identify $\mathbb{R}^3$ with $\{\ii P_1+\mathrm{j}P_2+\mathrm{k}P_3 \in \mathbb{H}\mid (P_1,P_2,P_3)\in \mathbb{R}^3\}$.
Via this identification, the vector product $P\times \Omega$ becomes $\frac{1}{2}[P,\Omega]:=\frac{1}{2}(P\Omega-\Omega P)$, for any $P,\Omega \in \mathbb{R}^3$.
Moreover, given $q=\cos(\alpha)+(q_1,q_2,q_3)\sin(\alpha) \in S^3$ and $P \in \mathbb{R}^3$, 
the quaternion product 
 $qP\overline{q}$ is in $\mathbb{R}^3$ and corresponds to the rotation
 of $P$ of angle $2\alpha$ around the axis $(q_1,q_2,q_3)$. 
Hence, $S^3$ can be seen as a double covering space of ${\rm SO}(3)$ (see \cite[Section 5.2]{Ratiu} for further details).
System (\ref{euler1}) is lifted to $S^3\times \mathbb{R}^3$ to the system
\begin{equation}\label{quaternion}
\begin{cases}
\begin{aligned}
\dfrac{dq(t)}{dt}=&q(t)\beta P(t), \\ 
\dfrac{dP(t)}{dt}=&\frac{1}{2}[P(t), \beta P(t)]+\dfrac{u_1(t)}{2}[\overline{q(t)}\ii q(t),\delta]+\frac{u_2(t)}{2}[\overline{q(t)}\mathrm{j}q(t),\delta]\\
&+\frac{u_3(t)}{2}[\overline{q(t)}\mathrm{k}q(t),\delta].
\end{aligned}
\end{cases}
\end{equation}
We are going to use the quaternion representation in order to prove that 
the vector fields characterizing 
 \eqref{quaternion} form a Lie bracket generating family. As a consequence, 
 the same will be true for \eqref{euler1}.

\subsection{Non-controllability of the classical genuine symmetric-top}

In most cases of physical interest, the electric dipole $\delta$ of a symmetric-top molecule lies along the symmetry axis of the molecule. If $I_1=I_2$, the symmetry axis is the third one, and we have that $\delta=
(0,0,\delta_3)^T$, $\delta_3 \neq 0$, in the body frame. The corresponding molecule is called a \emph{genuine symmetric-top} (\cite[Section 2.6]{gordy}). 

\begin{theorem}\label{genuinecla}
The third angular momentum $P_3$ is a conserved quantity for the controlled motion \eqref{euler1} of the genuine symmetry-top molecule.
\end{theorem}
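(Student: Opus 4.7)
The plan is to verify conservation of $P_3$ by a direct computation of $\dot P_3$ along solutions of \eqref{euler1}, splitting into the drift contribution and the controlled contributions, and showing that both vanish separately under the two structural hypotheses $I_1=I_2$ and $\delta=(0,0,\delta_3)^T$.

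First I would isolate the third component of the equation $\dot P = P\times(\beta P)+\sum_{i=1}^3 u_i(t)(g^{-1}e_i)\times \delta$. For the drift, since $\beta P=(P_1/I_1,P_2/I_2,P_3/I_3)^T$, the third component of $P\times \beta P$ is $P_1 P_2/I_2-P_2 P_1/I_1 = P_1P_2(1/I_2-1/I_1)$, which vanishes because $I_1=I_2$. This reflects the fact that for the free symmetric top, rotation around the symmetry axis is an integral of the Euler equations.

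Next I would treat the control terms. Writing $g^{-1}e_i=(a_i,b_i,c_i)^T$, the cross product $(g^{-1}e_i)\times(0,0,\delta_3)^T$ equals $(b_i\delta_3,-a_i\delta_3,0)^T$, so its third component is identically zero. Summing against the controls $u_i(t)$ yields no contribution to $\dot P_3$. Combining the two vanishings gives $\dot P_3\equiv 0$ along any admissible trajectory, which is the claim.

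There is no real obstacle: the result is a direct consequence of the rotational symmetry of both the free Hamiltonian (invariant under rotations around $a_3$ when $I_1=I_2$) and the interaction potential $V(g)=-\sum_i u_i\langle g^{-1}e_i,\delta\rangle$ (also invariant under such rotations, since $\delta$ is fixed by them). Equivalently, the Lie algebra element generating rotation around the body-fixed axis $a_3$ Poisson-commutes with $H$ for every choice of controls $u_1,u_2,u_3$, so Noether's theorem yields the conservation law. I would mention this symmetry interpretation briefly after the computation to make the geometric content explicit, and to foreshadow the quantum counterpart treated in Theorem~\ref{genuine}.
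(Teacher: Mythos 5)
Your proof is correct and follows essentially the same route as the paper: a direct computation showing that the third component of the drift $P\times(\beta P)$ vanishes when $I_1=I_2$ and that the third component of each control term $(g^{-1}e_i)\times(0,0,\delta_3)^T$ vanishes identically. The added Noether/symmetry remark is a nice complement but does not change the argument.
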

\begin{proof}
In order to compute the equation satisfied by $P_3$ in \eqref{euler1}, notice that
\begin{align*}
P(t)\times \beta P(t)=\begin{pmatrix}
P_1(t) \\
P_2(t) \\
P_3(t)
\end{pmatrix}\times\begin{pmatrix}
P_1(t)/I_2 \\
P_2(t)/I_2 \\
P_3(t)/I_3
\end{pmatrix}=\begin{pmatrix}
\Big(\frac{1}{I_3}- \frac{1}{I_2}\Big)P_2(t)P_3(t) \\
\Big(\frac{1}{I_2}- \frac{1}{I_3}\Big)P_1(t)P_3(t)\\
0
\end{pmatrix}.
\end{align*}
Moreover, $u_i(t)(g^{-1}(t)e_i)\times\delta=u_i(t)(g^{-1}(t)e_i)\times (0,0,\delta_3)^T=
(\star,\star,0)^T$.
Hence, for a genuine symmetric-top, 
 the equation for $P_3$ becomes $\frac{dP_3(t)}{dt}=0$.
\end{proof}

As a consequence, the controlled dynamics live in the hypersurfaces $\{P_3=\mathrm{const}\}$ and hence system (\ref{euler1}) is not controllable in the $6$-dimensional manifold ${\rm SO}(3)\times \mathbb{R}^3$.

\subsection{Controllability of the classical accidentally symmetric-top}
In Theorem \ref{genuinecla} we proved that $P_3$ is a first integral for equations (\ref{euler1}), using both the symmetry of the mass and the symmetry of the charge, meaning that $I_1=I_2$ and $\delta=(0,0,\delta_3)^T$. 
We consider now a symmetric-top molecule with electric dipole $\delta$ not along the symmetry axis of the body, that is, 
$\delta=(\delta_1,\delta_2,\delta_3)^T$, with $\delta_1 \neq 0$ or $\delta_2 \neq 0$. This system is usually called \emph{accidentally symmetric-top} (\cite[Section 2.6]{gordy}).

\begin{theorem}\label{accidentallycla}
For an accidentally symmetric-top molecule system \eqref{euler1} is controllable.
\end{theorem}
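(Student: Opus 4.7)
The strategy is to apply Theorem~\ref{basic} to the control-affine system~\eqref{euler1}: one must check that the drift $X$ is recurrent and that the family $\{X,Y_1,Y_2,Y_3\}$ is Lie bracket generating on ${\rm SO}(3)\times\mathbb{R}^3$. The recurrence is immediate: $X$ is the Hamiltonian vector field of $H=\frac12(P_1^2/I_1+P_2^2/I_2+P_3^2/I_3)$ on $T^\ast{\rm SO}(3)\cong{\rm SO}(3)\times\mathbb{R}^3$, so $H$ is conserved, the sublevel sets $\{H\leq c\}$ are compact and $X$-invariant, $X$ preserves the Liouville volume on each of them, and Poincar\'e's recurrence theorem applies.

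For the Lie bracket generating condition I would compute in the quaternion lift~\eqref{quaternion} on $S^3\times\mathbb{R}^3$, as suggested by the authors, since the covering $S^3\to{\rm SO}(3)$ is a local diffeomorphism. The $P$-components of the controls are the vectors $(\overline q e_iq)\times\delta$; as $\{\overline q e_iq\}_{i=1,2,3}$ is an orthonormal frame of $\mathbb{R}^3$, these span the $2$-dimensional plane $\delta^\perp$ in the $P$-fibre. The brackets $[X,Y_i]$ pick up, from $-D_PX\cdot Y_i$, a $q$-fibre contribution proportional to $q\beta((\overline q e_iq)\times\delta)$; varying $i$ these sweep out $q\beta(\delta^\perp)$, so combined with $X^q=q\beta P$ they span the whole tangent space $T_qS^3$ at every point with $\langle P,\delta\rangle\neq 0$. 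The only direction still to be produced is the $\delta$-component in the $P$-fibre, which I would extract from the $P$-component of the same bracket: a direct computation gives
\[
\langle [X,Y_i]^P,\delta\rangle=\Big(\frac{1}{I_1}-\frac{1}{I_3}\Big)\Big[P_3\big(Y_{i,2}^P\delta_1-Y_{i,1}^P\delta_2\big)+Y_{i,3}^P\big(P_2\delta_1-P_1\delta_2\big)\Big],
\]
where $Y_i^P=(\overline q e_iq)\times\delta$ and the subscript denotes the $\mathbb{R}^3$-component. The coefficient of the monomial $P_3\cdot(\overline q e_iq)_3$ in this polynomial equals $(1/I_1-1/I_3)(\delta_1^2+\delta_2^2)$, which is nonzero precisely under the symmetric-top assumption $I_1=I_2\neq I_3$ together with the accidental-symmetry hypothesis $(\delta_1,\delta_2)\neq(0,0)$.

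Putting everything together, Lie bracket generating holds on an open dense subset $\Omega\subset S^3\times\mathbb{R}^3$ cut out by a finite collection of polynomial inequalities. To extend it to the whole manifold via Lemma~\ref{lemmino}, I would note that from any $(q,P)\in\Omega^c$ the controls $Y_i$ can modify $P$ by any vector of $\delta^\perp$ while the drift precesses $(P_1,P_2)$ inside the Euler ellipsoid ($P_3$ being fixed by the free symmetric-top dynamics); alternating these motions one reaches $\Omega$, so $\mathrm{Reach}(q,P)\not\subset\Omega^c$. Theorem~\ref{basic} then yields controllability. The main technical obstacle is the bracket computation above: the degeneracy $I_1=I_2$ forces several natural contributions to cancel in pairs (indeed they cancel exactly when $I_1=I_3$ as well, as happens in the spherical case), so careful bookkeeping in the cross-product identities is required to isolate the surviving term proportional to $(\delta_1^2+\delta_2^2)$ and to exhibit a monomial on which it does not vanish.
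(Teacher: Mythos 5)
Your recurrence argument (Poincar\'e recurrence on the compact energy levels, Liouville volume) is fine and matches the property the paper simply cites. The gap is in the Lie-bracket-generating step: the family $\{X,Y_1,Y_2,Y_3,[X,Y_1],[X,Y_2],[X,Y_3]\}$ that you rely on does \emph{not} span the full six-dimensional tangent space anywhere, so the sixth direction cannot be ``extracted from the $P$-component of the same bracket''. The point is a double-counting: the brackets $[X,Y_i]$ whose $P$-components carry the $\delta$-component you compute also have nonzero base components, and those are exactly the vectors already used to span the three base directions. To add a third \emph{vertical} direction to $\mathrm{span}\{Y_1,Y_2,Y_3\}=\{0\}\oplus\delta^{\perp}$ you need a linear combination $c_0X+\sum_i c_i[X,Y_i]$ with vanishing base component. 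Writing $w_i=\overline q e_i q$, the base components are (up to left translation) $\beta P$ and $-\beta(w_i\times\delta)$; when $\langle P,\delta\rangle\neq0$ the only vertical combination is, up to scale, $Z=\sum_i\langle\delta,w_i\rangle\,[X,Y_i]$, and a short computation (using $\sum_i\langle\delta,w_i\rangle w_i=\delta$ and the linearity in $w_i\times\delta$ of the terms you isolated) gives
\begin{equation*}
Z^{P}=(\delta\times\beta P)\times\delta\ \in\ \delta^{\perp},
\end{equation*}
so no $\delta$-component is ever produced. Consequently the span of the seven fields above is at most five-dimensional at every point (one can check this numerically, e.g.\ at $g=\mathrm{Id}$, $P=(1,2,3)$, $\delta=(1,0,1)$, $I_1=I_2=1$, $I_3=1/2$), and the nonvanishing of the individual quantities $\langle [X,Y_i]^{P},\delta\rangle$ --- which you computed correctly --- does not certify six-dimensionality.

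This is precisely why the paper's proof does not stop at first-order brackets: its generating family is $X,Y_1,Y_2,[X,Y_1],[X,Y_2]$ together with the \emph{second-order} bracket $[[X,Y_1],Y_1]$, and six-dimensionality is certified by an explicit determinant in quaternion coordinates, which factors as $S_1(q)\cdots S_4(q)\,\langle P,\delta\rangle$; the escape from its zero set is then handled factor by factor via Lemma~\ref{lemmino} (your sketch of that last step would also need to be adapted to the actual bad set). To repair your argument you must adjoin at least one higher-order bracket such as $[[X,Y_i],Y_j]$ and verify that the corresponding vertical combination acquires a nonzero $\delta$-component (equivalently, that the full $6\times6$ determinant is not identically zero); the single monomial coefficient $(1/I_1-1/I_3)(\delta_1^2+\delta_2^2)$ you exhibit does not do this job. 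As a side remark, the vanishing of your term when $I_1=I_3$ is not merely a technical nuisance: in the spherical case $P\times\beta P\equiv0$, so $\langle P,\delta\rangle$ is conserved and the system is genuinely uncontrollable, consistently with the fact that both your argument and the paper's determinant (through the factor $I_2-I_3$) require $I_2\neq I_3$.
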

\begin{proof}
The drift $X$ is recurrent, as observed 
in \cite[Section 8.4]{AS}. Thus, by Theorem~\ref{basic}, to prove controllability it suffices to show that, for any $(g,P) \in {\rm SO}(3) \times \mathbb{R}^3$, $\mathrm{dim}\Big( \mathrm{Lie}_{(g,P)}\{X,Y_1,Y_2,Y_3\}\Big)=6$. Actually, we will 
find six vector fields in 
$\mathrm{Lie}\{X,Y_1,Y_2,Y_3\}$ whose span is six-dimensional everywhere but on a set of positive codimension, and we will conclude by applying Lemma \ref{lemmino}. Notice that
$[X,Y_i](g,P)=\begin{pmatrix}
-gS(\beta[(g^{-1})e_i \times \delta]) \\
\star
\end{pmatrix}.$
 Denote by $\Pi_{{\rm SO}(3)}$ the projection onto the ${\rm SO}(3)$ part of the tangent bundle, that is, $\Pi_{{\rm SO}(3)}: T({\rm SO}(3)\times \mathbb{R}^3)\rightarrow T{\rm SO}(3).$
Then we have
\begin{align*}
\mathrm{span}&\{\Pi_{{\rm SO}(3)}X(g,P),\Pi_{{\rm SO}(3)}[X,Y_1](g,P),  \Pi_{{\rm SO}(3)}[X,Y_2](g, P),\Pi_{{\rm SO}(3)}[X,Y_3](g,P) \} \\ &=gS\Big(\beta[\{\delta\}^\perp \oplus \mathrm{span}\{P\}]\Big).
\end{align*}
Hence, if $\langle P, \delta \rangle \neq 0$, we have
\begin{align}\nonumber
\dim & \Big( \mathrm{span}\{\Pi_{{\rm SO}(3)}X(g,P),\Pi_{{\rm SO}(3)}[X,Y_1](g,P),\Pi_{{\rm SO}(3)}[X,Y_2](g,P),\\ &\Pi_{{\rm SO}(3)}[X,Y_3](g,P) \} \Big) 
=3 .\label{3dim}
\end{align}

To go further in the analysis, it is convenient to use the quaternion parametrization (\ref{quaternion}) in which every field is polynomial. We have, in coordinates $q=(q_0,q_1,q_2,q_3) \in S^3, P=(P_1,P_2,P_3) \in \mathbb{R}^3$,
\[
X(q,P)=\begin{pmatrix}
q\beta P \\
\frac{1}{2}[P,\beta P]
\end{pmatrix}=\begin{pmatrix}
-q_1\frac{P_1}{I_2}-q_2\frac{P_2}{I_2}-q_3\frac{P_3}{I_3}\\[1mm]
q_0\frac{P_1}{I_2}+q_2\frac{P_3}{I_3}-q_3\frac{P_2}{I_2}\\[1mm]
q_0\frac{P_2}{I_2}-q_1\frac{P_3}{I_3}+q_3\frac{P_1}{I_2}\\[1mm]
q_0\frac{P_3}{I_3}+q_1\frac{P_2}{I_2}-q_2\frac{P_1}{I_2}\\[1mm]
\Big(\frac{1}{I_3}- \frac{1}{I_2}\Big)P_2P_3 \\[1mm]
\Big(\frac{1}{I_2}- \frac{1}{I_3}\Big)P_1P_3\\[1mm]
0
\end{pmatrix}, \]
\[
Y_1(q,P)=\begin{pmatrix}
0\\
\frac{1}{2}[\overline{q}\ii q,\delta]
\end{pmatrix}=\begin{pmatrix}
0\\
0\\
0\\
0\\
(q_1q_2-q_0q_3)\delta_3-(q_1q_3+q_0q_2)\delta_2\\
(q_1q_3+q_0q_2)\delta_1-\frac{1}{2}(q_0^2+q_1^2-q_2^2-q_3^2)\delta_3\\
\frac{1}{2}(q_0^2+q_1^2-q_2^2-q_3^2)\delta_2-(q_1q_2-q_0q_3)\delta_1
\end{pmatrix},
\]
\[
 Y_2(q,P)=\begin{pmatrix}
0\\
\frac{1}{2}[\overline{q}\mathrm{j}q,\delta]
\end{pmatrix}=\begin{pmatrix}
0\\
0\\
0\\
0\\
\frac{1}{2}(q_0^2-q_1^2+q_2^2-q_3^2)\delta_3-(q_2q_3-q_0q_1)\delta_2\\
(q_2q_3-q_0q_1)\delta_1-(q_1q_2+q_0q_3)\delta_3\\
(q_1q_2+q_0q_3)\delta_2-\frac{1}{2}(q_0^2-q_1^2+q_2^2-q_3^2)\delta_1
\end{pmatrix}.
\]
Let us consider 
the six vector fields $X,Y_1,Y_2,[X,Y_1],[X,Y_2],[[X,Y_1],Y_1]$: we have that  the 
determinant of the 
matrix 
obtained by removing the first row from the $7\times 6$ matrix
$$(X(q,P),Y_1(q,P),Y_2(q,P),[X,Y_1](q,P),[X,Y_2](q,P),[[X,Y_1],Y_1](q,P))$$ 
is equal to $S(q,P):=S_1(q)S_2(q)S_3(q)S_4(q)S_5(P)$, where
\begin{align*}
&S_1(q):=\frac{I_2-I_3}{{32}I_2^{{3}}I_3^2}q_1, \\
&S_2(q):=(-2q_1q_2\delta_1+2q_0q_3\delta_1+q_0^2\delta_2+q_1^2\delta_2-(q_2^2+q_3^2)\delta_2), \\
&S_3(q):=(q_0(-2q_2\delta_1+2q_1\delta_2)+2q_3(q_1\delta_1+q_2\delta_2)+(q_0^2-q_1^2-q_2^2+q_3^2)\delta_3)^2, \\
&S_4(q):=(-2(q_0q_2+q_1q_3)(\delta_1^2+\delta_2^2)+((q_0^2+q_1^2-q_2^2-q_3^2)\delta_1+2(q_1q_2-q_0q_3)\delta_2)\delta_3),\\
&S_5(P):=P_1\delta_1+P_2\delta_2+P_3\delta_3=\langle P,\delta \rangle.
\end{align*}
Hence, for all $(q,P) $ such that $S(q,P)\neq0$, 
\begin{align*}
\dim\Big(\mathrm{span}&\{X(q,P),Y_1(q,P),Y_2(q,P),[X,Y_1](q,P),[X,Y_2](q,P),\\ & [[X,Y_1],Y_1](q,P) \}\Big)=6,
\end{align*}
that is, outside the set $N:=
\{(q,P)\in S^3 \times \mathbb{R}^3\mid S(q,P)=0\}$ the 
family $X,Y_1,Y_2$ 
is Lie bracket generating.

Now we are left to prove that $\mathrm{Reach}(q,P)\not\subset N$ for every $(q,P) \in N$,
and then to apply Lemma \ref{lemmino}. 
Let us start by considering the factor $S_5$ of $S$ and notice that, for any fixed $q \in S^3$, $\{S_5=0\}$ defines a surface inside $\{q\}\times \mathbb{R}^3$. Denote by $\Pi_{\mathbb{R}^3}:T(S^3\times \mathbb{R}^3)\to T\mathbb{R}^3$ the projection onto the $\mathbb{R}^3$ part of the tangent bundle.
The vector field $\Pi_{\mathbb{R}^3}X$ is tangent to $\{S_5=0\}$ when 
\[
\langle \nabla_P S_5,\Pi_{\mathbb{R}^3}X \rangle=\frac12 \langle \delta , [P , \beta P] \rangle=0, 
\]
that is, if and only if $P_3=0$ or $P_2 \delta_1-P_1 \delta_2=0$. Notice that one vector between $\Pi_{\mathbb{R}^3}Y_1,\Pi_{\mathbb{R}^3}Y_2,\Pi_{\mathbb{R}^3}Y_3$ is not tangent to $\{P_3=0\}$, otherwise
\[\mathrm{span}\{ \Pi_{\mathbb{R}^3}Y_1,\Pi_{\mathbb{R}^3}Y_2,\Pi_{\mathbb{R}^3}Y_3 \}\subset\{P_3=0 \}=
\begin{pmatrix}
0\\
0\\
1
\end{pmatrix}
^\perp.
\]
However, 
$\mathrm{span}\{ \Pi_{\mathbb{R}^3}Y_1,\Pi_{\mathbb{R}^3}Y_2,\Pi_{\mathbb{R}^3}Y_3 \}=
\delta
^\perp
,$
which would imply that $\delta$ is collinear to $(0,0,1)^T$, which is impossible since the molecule is accidentally symmetric.

Concerning the hypersurface $\{P_2 \delta_1-P_1 \delta_2=0\}$, we consider again $\Pi_{\mathbb{R}^3}X$, which is tangent to it when $\langle \nabla_P ( P_2 \delta_1-P_1 \delta_2) ,\Pi_{\mathbb{R}^3}X \rangle=0$, that is, if and only if  $P_3=0$ or $P_1 \delta_1+P_2 \delta_2=0$. We treat the second case, being  $P_3=0$ already treated. Hence, we consider the intersection
\[
\begin{cases}
P_2 \delta_1-P_1 \delta_2=0, & \\ 
P_1 \delta_1+P_2 \delta_2=0. &
\end{cases}
\]
The only solution of the system is $P_1=P_2=0$, because the molecule is accidentally symmetric.  Finally, when $P_1=P_2=0$, we consider the two-dimensional distribution $\mathrm{span}\{ \Pi_{\mathbb{R}^3}Y_1,\Pi_{\mathbb{R}^3}Y_2,\Pi_{\mathbb{R}^3}Y_3 \}$, which cannot be tangent to the $P_3$ axis. 

Summarizing, if $\delta$ is not collinear to $(0,0,1)^T$, we have
$$\mathrm{Reach}(q,P)\not\subset \{S_5=0\}, \quad  \forall (q,P) \in \{S_5=0\}.$$

To conclude, if $(q,P) \in \{S_i=0\}$, $i=1,\dots,4$, then we fix $P$ and we get two-dimensional strata $\{ q \in S^3\mid S_i(q)=0\} \subset S^3$. Now the projections of the vector fields $X,[X,Y_1],[X,Y_2],[X,Y_3]$ on the base part of the bundle span a three-dimensional vector space if $\langle P, \delta\rangle \neq 0$, as observed in \eqref{3dim}. So, by possibly steering $P$ to a point where $\langle P, \delta\rangle \neq 0$, it is possible to exit from the union of $\{S_i=0\}$. This concludes the proof of the theorem.
\end{proof}

\subsection{Reachable sets of the classical genuine symmetric-top}
Theorem \ref{genuinecla} states that each hypersurface $\{P_3=\mathrm{const}\}$ is invariant for the controlled motion. Next we prove that the restriction of system \eqref{euler1} to any such hypersurface is controllable.
\begin{theorem}\label{reachcla}
Let $I_1=I_2$ and $\delta = (0,0,\delta_3)^T$, $\delta_3\neq 0$. Then for $({g}_0,{P}_0)\in {\rm SO}(3)\times \mathbb{R}^3$, ${P}_0=(P_{01},P_{02},P_{03})$, one has 
$$\mathrm{Reach}(g_0,P_0)=\{(g,P)\in  {\rm SO}(3)\times \mathbb{R}^3\mid  P_3=P_{03}\}. $$
\end{theorem}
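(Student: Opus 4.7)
The inclusion $\mathrm{Reach}(g_0,P_0)\subseteq\{P_3=P_{03}\}$ is immediate from Theorem~\ref{genuinecla}. For the reverse inclusion, set $c=P_{03}$ and restrict system~\eqref{euler1} to the $5$-dimensional invariant submanifold
\[
M_c=\{(g,P)\in\mathrm{SO}(3)\times\mathbb{R}^3\mid P_3=c\},
\]
to which $X,Y_1,Y_2,Y_3$ are tangent. My plan is to apply Theorem~\ref{basic} on $M_c$.

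Recurrence of $X|_{M_c}$ is easy: the drift is the Hamiltonian flow of $H(P)=(P_1^2+P_2^2)/(2I_2)+c^2/(2I_3)$, so it preserves $P_1^2+P_2^2$, and the level sets $\{H=\mathrm{const}\}\cap M_c$ are diffeomorphic to $\mathrm{SO}(3)\times S^1$ (or to $\mathrm{SO}(3)\times\{0\}$), hence compact; Poincar\'e recurrence concludes. For the Lie bracket generating condition at a point $(g,P)\in M_c$, note that since $\delta=\delta_3 e_3$ the controls evaluate to $Y_i=(0,\delta_3(g^{-1}e_i)\times e_3)$, and as $i$ varies the vectors $(g^{-1}e_i)\times e_3$ exhaust the plane $e_3^\perp$, so $Y_1,Y_2,Y_3$ contribute the full $2$-dimensional $P$-tangent of $M_c$. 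Using the formula $[X,Y_i]^g=-gA(\beta[(g^{-1}e_i)\times\delta])$ already computed in Theorem~\ref{accidentallycla} and the identity $\beta|_{e_3^\perp}=I_2^{-1}\mathrm{id}$, the fields $[X,Y_i]^g$ span the $2$-dimensional subspace $gA(e_3^\perp)\subset T_g\mathrm{SO}(3)$. Combined with
\[
X^g=gA(\beta P)=I_2^{-1}gA(P_\perp)+(c/I_3)gA(e_3),
\]
these give all of $T_g\mathrm{SO}(3)$ whenever $c\ne 0$, yielding $5$ independent directions at every point.

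The main obstacle is the case $c=0$, where $X^g$ itself lies in $gA(e_3^\perp)$ and first-order brackets deliver only $4$ dimensions. To reach the missing $gA(e_3)$ direction, I would compute the second-order bracket $[[X,Y_i],[X,Y_j]]^g$. Writing $[X,Y_i]^g=gW_i$ with $W_i=-(\delta_3/I_2)A(v_i)$ and $v_i=(g^{-1}e_i)\times e_3\in e_3^\perp$ (depending only on $g$, so that $\partial_P$-contributions drop out), this bracket decomposes as
\[
g[W_i,W_j]+(\text{a correction term lying in }gA(e_3^\perp)),
\]
and since the cross product of two vectors in $e_3^\perp$ is parallel to $e_3$, the principal term $g[W_i,W_j]=(\delta_3/I_2)^2\,gA(v_i\times v_j)$ contributes a multiple of $gA(e_3)$. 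A short calculation using the cofactor identity $\mathrm{adj}(g)=g^{-1}$ for $g\in\mathrm{SO}(3)$ yields $(v_i\times v_j)_3=\pm g_{k3}$ where $k$ is the element of $\{1,2,3\}\setminus\{i,j\}$; since the third column of $g$ has unit norm, some $g_{k3}$ is non-zero at every $g\in\mathrm{SO}(3)$, so one of the three brackets $[[X,Y_i],[X,Y_j]]$ produces the missing direction. Hence $\{X,Y_1,Y_2,Y_3\}$ is Lie bracket generating everywhere on $M_0$, and Theorem~\ref{basic} concludes.
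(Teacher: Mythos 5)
Your proof is correct, and for the delicate part it takes a genuinely different route from the paper. The treatment of the leaf $\{P_3=c\}$ with $c\neq 0$ coincides with the paper's (the span of $Y_1,Y_2,Y_3$ in the $P$-direction together with the $\mathrm{SO}(3)$-projections of $X,[X,Y_1],[X,Y_2],[X,Y_3]$, exactly as in \eqref{3dim} and \eqref{localcontr}). The difference is the case $c=0$: the paper passes to the quaternion lift, computes the span of the $S^3$-projections of $X,[X,Y_1],[X,Y_2],[[X,Y_1],X]$, identifies an algebraic bad set, and then invokes Lemma~\ref{lemmino} (analyticity, the Orbit Theorem, and an argument that the reachable set exits the bad set) to conclude; you instead compute the second-order brackets $[[X,Y_i],[X,Y_j]]$ directly on $\mathrm{SO}(3)\times\mathbb{R}^3$. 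Your key observations check out: since $\delta=\delta_3 e_3$ and $I_1=I_2$, the $\mathrm{SO}(3)$-part of $[X,Y_i]$ is $gW_i$ with $W_i=-(\delta_3/I_2)A(v_i)$, $v_i=(g^{-1}e_i)\times e_3$, depending on $g$ only, so the unknown $P$-components indeed do not enter the $g$-part of the double bracket; the correction terms come from differentiating $v_j$ along the flow and stay in $gA(e_3^\perp)$ because any derivative of $v_j$ is again a cross product with $e_3$; and $v_i\times v_j=\epsilon_{ijk}\langle g^{-1}e_k,e_3\rangle e_3=\epsilon_{ijk}g_{k3}e_3$, so modulo $gA(e_3^\perp)$ some double bracket contributes a nonzero multiple of $gA(e_3)$ at every $g$, since the third column of $g$ has unit norm. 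This gives the Lie bracket generating property at \emph{every} point of the leaf, avoiding both the quaternion parametrization and Lemma~\ref{lemmino}, which makes the argument shorter and arguably more transparent; what it does not buy is the explicit algebraic description of the degeneracy locus that the paper's computation exhibits. Two small points to tighten: state explicitly that all brackets are tangent to $M_c$ (immediate, since the $P_3$-components of $X$ and $Y_i$ vanish identically in the genuine case), and, for the recurrence of the restricted drift, name the invariant finite measure (Haar times the rotation-invariant measure on the circles $P_1^2+P_2^2=r^2$) before invoking Poincar\'e recurrence — the paper sidesteps this by citing the recurrence of the free rigid-body flow.
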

\begin{proof}
From Theorem \ref{genuinecla} we know that $\{P_3=\mathrm{const}\}$ is invariant. Since the drift $X$ is recurrent, it suffices to prove that system (\ref{euler1}) is Lie bracket generating on the $5$-dimensional manifold $\{P_3=\mathrm{const}\}$.

We recall from \eqref{3dim} that, if $\langle P,\delta \rangle \neq 0$, that is, if $P_3\neq 0$, we have
\begin{align*}
\dim &\Big( \mathrm{span}\{\Pi_{{\rm SO}(3)}X(g,P),\Pi_{{\rm SO}(3)}[X,Y_1](g,P),\Pi_{{\rm SO}(3)}[X,Y_2](g,P),\\ &\Pi_{{\rm SO}(3)}[X,Y_3](g,P) \} \Big)=3.
\end{align*}
Moreover, since 
$\Pi_{\mathbb{R}^3} Y_i(q,P)=(g^{-1}e_i) \times \delta $ for $i=1,2,3$,  we have that
\begin{equation}\label{localcontr}
\dim\Big( \mathrm{span}\{\Pi_{\mathbb{R}^3} Y_1,\Pi_{\mathbb{R}^3} Y_2,\Pi_{\mathbb{R}^3} Y_3 \} \Big)=2 
\end{equation}
everywhere. Thus, if $P_3\neq 0$, it follows that 
\begin{align*} 
\dim &\Big( \mathrm{span} \{X(g,P),Y_1(g,P),Y_2(g,P),Y_3(g,P),[X,Y_1](g,P),[X,Y_2](g,P),\\ &[X,Y_3](g,P) \} \Big)=5.
\end{align*}
So the system is Lie bracket generating on the manifold $\{P_3=\text{const} \neq 0\}$.

We are left to consider the case $P_3=0$. Notice that $\Pi_{\mathbb{R}^3} Y_1,\Pi_{\mathbb{R}^3} Y_2,\Pi_{\mathbb{R}^3} Y_3$ span a two-dimensional distribution for any value of $P_3$. So we consider in the quaternion parametrization the projections of $X,[X,Y_1],[X,Y_2],[[X,Y_1],X]$ on the $S^3$ part of the bundle and we obtain 
\begin{align*}
\dim& \Big( \mathrm{span} \{\Pi_{S^3}X(q,P),\Pi_{S^3}[X,Y_1](q,P),\Pi_{S^3}[X,Y_2](q,P),\\ &\Pi_{S^3}[[X,Y_1],X](q,P) \} \Big)=3, 
\end{align*}
for $P_3=0$, except when $q_3[2P_2(q_1q_2-q_0q_3)+ P_1(q_0^2+q_1^2-q_2^2-q_3^2) ]=0$.
This equation defines the union of two surfaces inside $S^3$. (Notice that we can assume $P_1 \neq 0$ and $P_2\neq 0$ because \eqref{localcontr} gives local controllability in $(P_1,P_2)$). On $\{q_3=0\}$, we have that $\Pi_{S^3}X$ is tangent if and only if $q_1P_2-q_2P_1=0$. On the curve $\gamma \subset S^3$ of equation
\[
\begin{cases}
q_3=0, & \\ 
q_1P_2-q_2P_1=0, &
\end{cases}
\]
 we can consider the two-dimensional distribution spanned by $\Pi_{S^3}[X,Y_1],\Pi_{S^3}[X,Y_2]$, $\Pi_{S^3}[X,Y_3]$, which is clearly not tangent to $\gamma$. Following Lemma \ref{lemmino}, the system is Lie bracket generating also on $\{q_3=0\}$. 
 
Analogously, on $\{2P_2(q_1q_2-q_0q_3)+ P_1(q_0^2+q_1^2-q_2^2-q_3^2)=0 \}$ we consider the vector field $\Pi_{S^3}[[[X,Y_1],X],Y_2]$ which is tangent if and only if $(q_0q_2+q_1q_3)(P_1q_0q_1+P_2q_0q_2-P_2q_1q_3+P_1q_2q_3)=0$. Again, since the distribution spanned by $\Pi_{S^3}[X,Y_1],\Pi_{S^3}[X,Y_2]$, $\Pi_{S^3}[X,Y_3]$ is two-dimensional, we can exit from the set of equations
\[
\begin{cases}
2P_2(q_1q_2-q_0q_3)+ P_1(q_0^2+q_1^2-q_2^2-q_3^2)=0, & \\ 
(q_0q_2+q_1q_3)(P_1q_0q_1+P_2q_0q_2-P_2q_1q_3+P_1q_2q_3)=0, &
\end{cases}
\]
whose strata have dimension at most one. Thus, applying again Lemma \ref{lemmino}, we can conclude that 
the restriction of
the system to the manifold $\{P_3=0\}$ is Lie bracket generating.
\end{proof}

\section{Quantum symmetric-top molecule}\label{quantum}

\subsection{Controllability of the multi-input Schr{\"o}dinger equation}\label{notations}
Let $\ell \in \mathbb{N}$ 
and $U\subset \mathbb{R}^\ell$ be a neighborhood of the origin. 
 Let $\mathcal{H}$ be an infinite-dimensional Hilbert space with scalar product $\langle \cdot, \cdot \rangle $ (linear in the first entry and conjugate linear in the second), $H, B_1,\dots,B_\ell$ be (possibly unbounded) self-adjoint operators on $\mathcal{H}$, with domains $D(H),D(B_1),\dots,D(B_\ell)$. We consider the controlled Schr{\"o}dinger equation
\begin{equation}\label{quantumcontrol}
\ii\frac{d\psi(t)}{dt}=(H+\sum_{j=1}^\ell u_j(t)B_j)\psi(t), \quad \psi(t) \in \mathcal{H}, \quad u(t) \in U.
\end{equation}
\begin{definition}
\begin{itemize}
\item We say that the operator $H$ satisfies ($\mathbb{A}1$) if it has discrete spectrum with infinitely many distinct eigenvalues (possibly degenerate).\\
Denote by $\mathcal{B}$ a Hilbert basis $(\phi_k)_{k\in \mathbb{N}}$ of $\mathcal{H}$ made of eigenvectors of $H$ associated with the family of eigenvalues $(\lambda_k)_{k\in \mathbb{N}}$ and let $\mathcal{L}$ be the set of finite linear combination of eigenstates, that is,
$$\mathcal{L}= \mathrm{span} \{\phi_k\mid k\in \mathbb{N} \}. $$
\item We say that $(H,B_1,\dots,B_\ell,\mathcal{B})$ satisfies ($\mathbb{A}2$) if $\phi_k \in D(B_j)$ for every $k \in \mathbb{N}$, $j=1,\dots,\ell$.
\item We say that $(H,B_1,\dots,B_\ell,\mathcal{B})$ satisfies ($\mathbb{A}3$) if 
$$H+\sum_{j=1}^\ell u_jB_j:\mathcal{L}\rightarrow \mathcal{H}$$
is essentially self-adjoint for every $u \in U$.
\item We say that $(H,B_1,\dots,B_\ell,\mathcal{B})$ satisfies ($\mathbb{A}$) if $H$ satisfies ($\mathbb{A}1$) and\\ $(H,B_1,\dots,B_\ell,\mathcal{B})$ satisfies  ($\mathbb{A}2$) and  ($\mathbb{A}3$).
\end{itemize}
\end{definition}

If  $(H,B_1,\dots,B_\ell,\mathcal{B})$ satisfies ($\mathbb{A}$) then, for every $(u_1,\dots,u_\ell)\in U$, $H+\sum_{j=1}^\ell u_jB_j$ generates a one-parameter group $e^{-\ii t(H+\sum_{j=1}^\ell u_jB_j)}$ inside the group of unitary operators $U(\mathcal{H})$. It is therefore possible to define the propagator $\Gamma_T^u$ at time $T$ of system (\ref{quantumcontrol}) associated with a 
piecewise constant control law $u(\cdot)=(u_1(\cdot),\dots,u_\ell(\cdot))$ by composition of flows of the type $e^{-\ii t(H+\sum_{j=1}^\ell u_jB_j)}$. 
\begin{definition}
Let  $(H,B_1,\dots,B_\ell,\mathcal{B})$ satisfy ($\mathbb{A}$).
\begin{itemize}
\item Given $\psi_0,\psi_1$ in the unit sphere $\mathcal{S}$ of $\mathcal{H}$, we say that $\psi_1$ is reachable from $\psi_0$ if there exist a time $T>0$ and a piecewise constant control law $u:[0,T]\rightarrow U$ such that $\psi_1=\Gamma_T^u(\psi_0)$. We denote by $\mathrm{Reach}(\psi_0)$ the set of reachable points from $\psi_0$.
\item We say that (\ref{quantumcontrol}) is approximately controllable if for every $\psi_0\in \mathcal{S}$ the set $\mathrm{Reach}(\psi_0)$ is dense in $\mathcal{S}$.
\end{itemize}
\end{definition}

As a byproduct of the techniques 
used to prove approximate controllability of  (\ref{quantumcontrol}) 
for our problem,
we will actually obtain a slightly stronger controllability property. For this reason, 
let us introduce the notion of module-tracker (m-tracker, for brevity) that is, a system for which any given curve can be tracked up to (relative) phases. The identification up to phases of elements of $\mathcal{H}$ in the basis $\mathcal{B}=(\phi_k)_{k\in \mathbb{N}}$ can be accomplished by the projection
$$\mathcal{M}:\psi \mapsto \sum_{k\in \mathbb{N}} |\langle \phi_k,\psi \rangle| \phi_k.$$
\begin{definition}
Let $(H,B_1,\dots,B_\ell,\mathcal{B})$ satisfy ($\mathbb{A}$). We say that system (\ref{quantumcontrol}) is an \emph{m-tracker} if, for every $r\in \mathbb{N}$, $\psi_1,\dots,\psi_r$ in $\mathcal{H}$, $\widehat{\Gamma}:[0,T]\rightarrow U(\mathcal{H})$ continuous with $\widehat{\Gamma}_0=\mathrm{Id}_{\mathcal{H}}$, and $\epsilon >0$, there exists an invertible increasing continuous function $\tau:[0,T]\rightarrow [0,T_\tau]$ and a piecewise constant control $u:[0,T_\tau]\rightarrow U$ such that
$$\|\mathcal{M}(\widehat{\Gamma}_t\psi_k)-\mathcal{M}(\Gamma_{\tau(t)}^u\psi_k)  \| <\epsilon, \qquad k=1,\dots,r,$$
for every $t\in [0,T_\tau]$.
\end{definition}

\begin{remark}
We recall that if system \eqref{quantumcontrol} is an m-tracker, then it is also approximately controllable, as noticed in \cite[Remark 2.9]{BCS}.
\end{remark}

Following \cite{BCS}
and \cite{CS},
we now introduce some objects 
that we later use
to state a 
sufficient condition for a system to be an m-tracker.
The proposed sufficient condition 
can be seen as a generalization of the main controllability result in \cite{BCS}. 
The main difference is that here, instead of testing a sequence of finite-dimensional properties on an increasing sequence of linear subspaces of $\mathcal{H}$, we test them on a sequence of overlapping finite-dimensional spaces, not necessarily ordered by inclusion. This allows the sufficient condition to be checked block-wise.

Let $\{I_j\mid j\in \mathbb{N}\}$ be a family of finite subsets of $\mathbb{N}$ such that $\cup_{j\in \mathbb{N}}I_j=\mathbb{N}$. 
Denote by $n_j$
the cardinality of $I_j$. Consider the subspaces 
\[\mathcal{M}_j:=\mathrm{span}\{\phi_n \mid n\in I_j\}\subset \mathcal{H}\]
and their associated orthogonal projections 
\[\Pi_{\mathcal{M}_j}:\mathcal{H}\ni \psi \mapsto \sum_{n\in I_j}   \langle \phi_n,\psi \rangle \phi_n \in \mathcal{H}.
\]

 Given a linear operator $Q$ on $\mathcal{H}$ we identify the linear operator $\Pi_{\mathcal{M}_j} Q \Pi_{\mathcal{M}_j}$ 
preserving 
$\mathcal{M}_j$ 
with its 
complex matrix representation with respect to the basis 
$(\phi_n)_{n\in I_j}$.
The  set $\Sigma_j=\{|\lambda_l-\lambda_{l'}|\mid l,l'\in I_j \}$ is then the collection of the spectral gaps of 
$\Pi_{\mathcal{M}_j} H \Pi_{\mathcal{M}_j}$.
We define $B_i^{(j)}:= \Pi_{\mathcal{M}_j} B_i \Pi_{\mathcal{M}_j}$ 
for every $i=1,\dots,\ell$.

If the element $(B_i)_{l,k}$ is different from zero, then a control $u_i$ oscillating at frequency $|\lambda_l-\lambda_k|$ induces a population transfer between the states $\phi_l$ and $\phi_k$ (\cite{chambrion}). The dynamics of such a population transfer depend on the other pairs of states  $\phi_{l'}$, $\phi_{k'}$ having the same spectral gap and whose corresponding element $(B_i)_{l',k'}$ is different from zero. We are interested in controlling the induced 
population dynamics within a space ${\cal M}_j$. 
This motivates the definition of the sets 
\begin{align*}
\Xi_j^0=\{(\sigma,i)\in \Sigma_j \times \{1,\dots,\ell\} \mid  \mbox{}&
(B_i)_{l,k}=0\ \mbox{for every }l\in \mathbb{N},\ k\in \mathbb{N}\setminus I_j\\
&\mbox{ such that }|\lambda_l-\lambda_k|=\sigma 
 \},
\end{align*}
and
\begin{align*} 
\Xi_j^1=\{(\sigma,i)\in \Sigma_j \times \{1,\dots,\ell\} \mid \mbox{}& (B_i)_{l,k}=0\ \mbox{for every }l\in I_j,\ k\in \mathbb{N}\setminus I_j\\
&\mbox{ such that }|\lambda_l-\lambda_k|=\sigma  \}.
\end{align*}
While the set $\Xi_j^1$ compares only with pairs of states $\phi_{l},\phi_k$ with $\phi_l$ in ${\cal M}_j$, such a requirement is not present in the definition if $\Xi_j^0$. This means that for $(\sigma,i)\in \Xi_j^0$ the induced population dynamics obtained by a control $u_i$ oscillating at frequency $\sigma$ 
not only does not produce population transfer out of ${\cal M}_j$, but also
is trivial within the orthogonal complement to ${\cal M}_j$.

For every $\sigma \geq 0$, and every square matrix $M$ of dimension $m$, let
$$\mathcal{E}_\sigma(M)=(M_{l,k}\delta_{\sigma,|\lambda_l-\lambda_k| })_{l,k=1,\dots,m}, $$
where $\delta_{l,k}$ is the Kronecker delta. The $n_j\times n_j$ matrix $\mathcal{E}_\sigma(B_i^{(j)})$ corresponds to the activation in $B_i^{(j)}$ of the spectral gap $\sigma\in \Sigma_j$: every element is $0$ except for the $(l,k)$-elements such that $|\lambda_l-\lambda_k|=\sigma$.
A control $u_i$ oscillating  at frequency $\sigma$ can induce the dynamics in ${\cal M}_j$ described by the matrix $\mathcal{E}_\sigma(B_i^{(j)})$, and also, by phase modulation,
those described
by the matrix 
$W_\xi$, $\xi \in S^1\subset \mathbb{C}$, 
defined by 
\begin{equation}
(W_\xi(M))_{l,k}=\begin{cases}
\xi M_{l,k}, & \lambda_l< \lambda_k, \\ 
0, & \lambda_l = \lambda_k, \\ 
\bar{\xi} M_{l,k}, & \lambda_l > \lambda_k.  
\end{cases} 
\end{equation}
Let us 
consider 
the sets of excited modes 
\begin{equation}\label{eq:modes}
\nu_{j}^s:= \{W_\xi(\mathcal{E}_\sigma(\ii B_i^{(j)})) \mid (\sigma,i)\in \Xi_{j}^s, \sigma \neq 0, \xi \in S^1 \}, \quad s=0,1.
\end{equation}
Notice that $\nu_{j}^0 \subset \nu_{j}^1 \subset \mathfrak{su}(n_j)$. Indeed, we have the following picture: 
$$\mathcal{E}_\sigma(\ii B_i^{(j)})\in\nu_j^{0} \Rightarrow \mathcal{E}_\sigma(\Pi_{j-1,j,j+1}\ii B_i\Pi_{j-1,j,j+1})=\left[
\begin{array}{c|c|c}
0 & 0 &0\\
\hline
0 & \mathcal{E}_\sigma(\ii B_i^{(j)}) &0\\
\hline
0 & 0 &0
\end{array}
\right] $$
$$\mathcal{E}_\sigma(\ii B_i^{(j)})\in\nu_j^{1} \Rightarrow \mathcal{E}_\sigma(\Pi_{j-1,j,j+1}\ii B_i\Pi_{j-1,j,j+1})=\left[
\begin{array}{c|c|c}
* & 0 &*\\
\hline
0 & \mathcal{E}_\sigma(\ii B_i^{(j)}) &0\\
\hline
* & 0 &*
\end{array}
\right] $$
where $\Pi_{j-1,j,j+1}$ denotes the projection onto $\mathcal{M}_{j-1}\oplus\mathcal{M}_{j}\oplus\mathcal{M}_{j+1}$.

 We denote by $\mathrm{Lie}(\nu_{j}^s)$ the Lie subalgebra  of $\mathfrak{su}(n_j)$ generated by the matrices in $\nu_{j}^s$, $s=0,1$,  and define 
$\mathcal{T}_j$ as the minimal ideal of $ \mathrm{Lie}(\nu_j^1)$ containing $\nu_j^0$.

Finally, we introduce the graph $\mathcal{G}$ with vertices $\mathcal{V}=\{I_j\mid j\in \mathbb{N}\}$ and edges $\mathcal{E}=\{(I_j,I_k)\mid j,k\in\mathbb{N},\;I_j\cap I_k\neq \emptyset\}$. We are now in a position to state a new sufficient condition for a system to be an m-tracker, and thus, approximately controllable.
\begin{theorem}\label{LGTC}
Assume that  ($\mathbb{A}$) holds true. If the graph $\mathcal{G}$ is connected and $\mathcal{T}_j=\mathfrak{su}(n_j)$ for every $j\in \mathbb{N}$, then \eqref{quantumcontrol} is an m-tracker.
\end{theorem}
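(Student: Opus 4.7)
The approach follows the Lie--Galerkin philosophy of \cite{BCS,CS}, adapted to the overlapping-block covering encoded by $\mathcal{G}$ rather than to a nested filtration. The plan has three ingredients: a finite-dimensional truncation compatible with the graph structure, a block-by-block generation result exploiting the ideal $\mathcal{T}_j$, and a patching step along connected paths in $\mathcal{G}$.

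First, I would use the standard truncation arguments of \cite{BCS,chambrion} to reduce the m-tracking problem to an approximation of $\widehat{\Gamma}_t$ on a finite-dimensional subspace $\bigoplus_{j\in J}\mathcal{M}_j$, where $J\subset\mathbb{N}$ is finite and chosen so that the restriction produces an error below $\epsilon/2$ in $\mathcal{M}$-norm on the trajectories $t\mapsto\widehat{\Gamma}_t\psi_k$, $k=1,\dots,r$. By the connectedness hypothesis on $\mathcal{G}$, we may enlarge $J$ so that the induced subgraph is connected. The problem is thereby reduced to constructing, up to relative phases, a piecewise constant control whose propagator approximates $\widehat{\Gamma}_t$ on $\bigoplus_{j\in J}\mathcal{M}_j$.

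Second, I would prove that any element of $\mathrm{SU}(n_j)$ can be effectively realized on $\mathcal{M}_j$ without altering the moduli on $\mathcal{M}_j^\perp$. A resonant control $u_i(t)=v\cos(\sigma t+\theta)$ applied on a time interval of length of order $1/v$ generates, in the interaction picture with respect to $H$ and to leading order in $v$, an effective evolution whose generator is proportional to $W_\xi(\mathcal{E}_\sigma(\ii B_i))$, with $\xi=e^{\ii\theta}$. For $(\sigma,i)\in\Xi_j^0$ this generator is supported in $\mathcal{M}_j$ and vanishes on $\mathcal{M}_j^\perp$; for $(\sigma,i)\in\Xi_j^1$ it is block-diagonal with respect to $\mathcal{M}_j\oplus\mathcal{M}_j^\perp$, its $\mathcal{M}_j$-part being the element of $\nu_j^1$, while the $\mathcal{M}_j^\perp$-part is an uncontrolled residual that commutes with the projection. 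The crucial algebraic observation is the conjugation trick: for $X\in\nu_j^0$ (hence identity outside $\mathcal{M}_j$) and $Y$ a block-diagonal generator whose $\mathcal{M}_j$-part lies in $\mathrm{Lie}(\nu_j^1)$, the composition $e^{tY}e^{sX}e^{-tY}$ acts as the identity on $\mathcal{M}_j^\perp$ (the outer factors cancel there), while on $\mathcal{M}_j$ it realizes $\exp(s\,\mathrm{Ad}_{e^{tY}}X)$, whose generator lies in the ideal $\mathcal{T}_j$. Iterating this construction and closing under limits, the group $\exp(\mathcal{T}_j)$ is realized on $\mathcal{M}_j$ with no net effect on $\mathcal{M}_j^\perp$-moduli; the assumption $\mathcal{T}_j=\mathfrak{su}(n_j)$ therefore yields all of $\mathrm{SU}(n_j)$.

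Third, I would concatenate such block-localized unitaries along the connected subgraph on $J$: since adjacent blocks share at least one basis vector, a suitable finite product of block-localized unitaries on $\bigoplus_{j\in J}\mathcal{M}_j$ approximates the truncation of $\widehat{\Gamma}_t$ to the desired precision, and an increasing time reparametrization $\tau:[0,T]\to[0,T_\tau]$ absorbs the large implementation time of each high-frequency block unitary. The main obstacle is the second step: one must rigorously match the algebraic definition of the ideal $\mathcal{T}_j$ with the set of one-parameter subgroups that can be generated inside $\mathcal{M}_j$ without perturbing $\mathcal{M}_j^\perp$-moduli, and control the averaging errors and BCH truncation errors uniformly in $j\in J$. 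This is precisely the conceptual heart of the generalization from the nested setting of \cite{BCS}: with overlapping blocks, side effects must be controlled on both sides of each $\mathcal{M}_j$, and it is the \emph{ideal} closure of $\nu_j^0$ in $\mathrm{Lie}(\nu_j^1)$, rather than a mere Lie subalgebra, that captures the correct notion of effective block-wise controllability.
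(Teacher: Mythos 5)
Your proposal attempts to re-derive the analytic tracking machinery directly in the overlapping-block setting, whereas the paper's proof is purely algebraic: it reduces Theorem~\ref{LGTC} to \cite[Theorem 2.8]{BCS} by verifying the Lie--Galerkin tracking condition. Concretely, connectedness of $\mathcal{G}$ is used only to reorder the blocks so that $I_{j+1}\cap(\cup_{k=1}^{j}I_k)\neq\emptyset$; one then sets $\widetilde{I}_j=\cup_{i=1}^{j}I_i$, ${\cal Z}_j=\sum_{k=1}^j\mathcal{M}_k$, and proves by induction on $m$ that $\mathrm{Lie}(\cup_{j=1}^{m}\widetilde{\mathcal{T}}_j)=\mathfrak{su}(\dim{\cal Z}_m)$, the key step being that for $t\in I_{m+1}\setminus(\cup_{j\le m}I_j)$ and $p\in\cup_{j\le m}I_j$ one picks $r$ in the overlap and obtains $G_{p,t}=[G_{p,r},G_{r,t}]$, using the block structure of matrices in $\widetilde{\mathcal{T}}_{m+1}$ (which follows from the definitions of $\Xi_j^0$, $\Xi_j^1$ and the ideal property). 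No averaging or propagator analysis is redone at all.

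Judged on its own terms, your proposal has two genuine gaps. First, the step you yourself flag as ``the main obstacle''---that a control oscillating at a resonant frequency implements, with controlled errors and without disturbing moduli on $\mathcal{M}_j^\perp$, exactly the dynamics generated by the ideal $\mathcal{T}_j$, uniformly over the blocks and simultaneously for the $r$ tracked vectors---is precisely the content of \cite[Theorem 2.8]{BCS}; leaving it unresolved means the proposal is a program rather than a proof, and re-proving it for overlapping blocks (with assumption~($\mathbb{A}$), unbounded $B_i$, time reparametrization, and the interaction-picture error estimates) is substantial work that the paper deliberately avoids by the reduction above. Second, your patching step is asserted, not proved: it is not automatic that finite products of unitaries, each lying in $\mathrm{SU}(\mathcal{M}_j)$ extended by the identity, approximate the truncation of an arbitrary $\widehat{\Gamma}_t$ on $\bigoplus_{j\in J}\mathcal{M}_j$; the fact that overlapping blocks generate the full special unitary group of the union is exactly what the paper's induction with the shared index $r$ establishes (at the Lie algebra level), so this part of the argument must be supplied, not invoked. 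If you replace your second and third steps by the construction of the nested family $\widetilde{I}_j$ and the bracket-generation induction, you recover the paper's proof and can cite \cite{BCS} for all the analysis.
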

\begin{proof}
The proof works by applying Theorem~2.8 in \cite{BCS}, which 
guarantees that \eqref{quantumcontrol} is an m-tracker if a suitable condition, called Lie--Galerkin tracking condition (\cite[Definition 2.7]{BCS}), holds true. In terms of the notation introduced here, the Lie--Galerkin tracking condition is true if there exists a sequence $\{\widetilde{I}_j\mid j\in \mathbb{N}\}$ of finite subsets of $\mathbb{N}$, strictly increasing with respect to the inclusion, such that $\cup_{j\in \mathbb{N}}\widetilde{I}_j=\mathbb{N}$ and $\mathcal{T}_j=\mathfrak{su}(n_j)$ for every $j\in \mathbb{N}$.

Up to reordering the sets $I_j$, we can assume that 
\begin{equation}\label{eq:ordering}
I_{j+1}\cap (\cup_{k=1}^j I_k)\not= \emptyset,\qquad \forall j\in\mathbb{N}.
\end{equation} 
For $j\in \mathbb{N}$, let $\widetilde{I}_j=\cup_{i=1}^jI_i$ and $  {\cal Z}_j=
\sum_{k=1}^j \mathcal{M}_{k}.$

The Lie--Galerkin tracking condition holds true if
\begin{equation}\label{induction}
\mathrm{Lie}(\cup_{j=1}^m\widetilde{\mathcal{T}}_j)=\mathfrak{su}(\dim({\cal Z}_m)),\qquad m\in\mathbb{N},
\end{equation}
where the set of operators $\widetilde{\mathcal{T}}_j$ is obtained similarly to $\mathcal{T}_j$, 
replacing 
$\nu_j^s$, $s=0,1$, by 
\begin{equation*}
\{W_\xi(\mathcal{E}_\sigma(\ii \Pi_{{\cal Z}_m
}B_i\Pi_{{\cal Z}_m
})) \mid (\sigma,i)\in \Xi_{j}^s, \sigma \neq 0, \xi \in S^1 \}, \quad s=0,1.
\end{equation*}

We proceed by induction on $m$. For $m=1$, \eqref{induction} is true, since we have that $\mathrm{Lie}(\mathcal{T}_1)=\mathcal{T}_1=\mathfrak{su}(n_1)=\mathfrak{su}( \dim(\mathcal{Z
}_1))$.
Assume now that \eqref{induction} is true for $m$, and consider the vertex $I_{m+1}\in {\cal V}$. 
Consider $t,p\in \cup_{j=1}^{m+1}I_{j}$ and let us prove that $G_{t,p}:= e_{t,p}- e_{p,t}$ is in 
$\mathrm{Lie}(\cup_{j=1}^{m+1}\widetilde{\mathcal{T}}_j)$, 
where $e_{a,b}$ is the matrix with all entries equal to 0 except for the one in row $a$ and column $b$, which is equal to 1 (and the indices in $\cup_{j=1}^{m+1}I_{j}$ are identified with the elements of $\{1,\dots,\dim({\cal Z}_{m+1}
)\}$).
Decomposing ${\cal Z}_{m+1}$ as a direct orthogonal sum $V_1\oplus ({\cal Z}_{m}\cap {\cal M}_{m+1}) \oplus V_2$ with $V_1\subset  {\cal Z}_{m}$ and $V_2\subset {\cal M}_{m+1}$, 
a matrix in  $\widetilde{\mathcal{T}}_{m+1}$ has the 
form 
\[\left[
\begin{array}{c|c|c}
0 & 0 &0\\
\hline
0 & Q_{11} &Q_{12}\\
\hline
0 & Q_{21} &Q_{22}
\end{array}
\right],\qquad \left[
\begin{array}{c|c}
Q_{11} &Q_{12}\\
\hline
 Q_{21} &Q_{22}
\end{array}
\right]\in \mathfrak{su}(n_{m+1}),\]
as it follows from the definition of $\Xi_j^0$ and $\Xi_j^1$ and the fact that $\widetilde{\mathcal{T}}_{m+1}$ is the ideal generated by $\nu_j^0$ inside ${\rm Lie}(\nu_j^1)$.
Similarly, a matrix in $\cup_{j=1}^{m}\widetilde{\mathcal{T}}_j$ has the form 
\[\left[
\begin{array}{c|c|c}
Q_{11}&Q_{12} & 0 \\
\hline
Q_{21}&Q_{22} & 0 \\
\hline
0 & 0 &0
\end{array}
\right],\qquad \left[
\begin{array}{c|c}
Q_{11} &Q_{12}\\
\hline
 Q_{21} &Q_{22}
\end{array}
\right]\in \mathfrak{su}(\dim({\cal Z}_{m}
)).\]

If $t,p\in \cup_{j=1}^mI_{j}$ or $t,p\in I_{m+1}$ the conclusion follows from the induction hypothesis and the identity $\mathcal{T}_{m+1}=\mathfrak{su}(n_{m+1})$. 
Let then $t\in I_{m+1}\setminus (\cup_{j=1}^mI_{j})$ and $p\in \cup_{j=1}^mI_{j}$. 
Fix, moreover, $r\in I_{m+1}\cap (\cup_{j=1}^mI_{j})$, whose existence is guaranteed by \eqref{eq:ordering}. 
Again by the induction hypothesis and the identity $\mathcal{T}_{m+1}=\mathfrak{su}(n_{m+1})$, 
we have that 
$G_{p,r}
$ and $G_{r,t}
$ are in $\mathrm{Lie}(\cup_{j=1}^{m+1}\widetilde{\mathcal{T}}_j)$.
The bracket $[G_{p,r},G_{r,t}]=G_{p,t}$ is therefore also in $\mathrm{Lie}(\cup_{j=1}^{m+1}\widetilde{\mathcal{T}}_j)$. By similar arguments, we deduce that  every  element of a basis of $\mathfrak{su}(\dim({\cal Z}_{m+1}
))$ is in $\mathrm{Lie}(\cup_{j=1}^{m+1}\widetilde{\mathcal{T}}_j)$.
 \end{proof}

\subsection{The Schr{\"o}dinger equation of a symmetric-top subject to electric fields}
We recall in this section some general facts about Wigner functions and the theory of angular momentum in quantum mechanics (see, for instance, \cite{Varshalovich,gordy}).

We use Euler's angles $(\alpha,\beta,\gamma)\in[0,2\pi)\times[0,\pi]\times [0,2\pi)$ to describe the configuration space ${\rm SO}(3)$ of the molecule. More precisely, the coordinates of a vector change from the body fixed frame $a_1,a_2,a_3$ to the space fixed frame $e_1,e_2,e_3$ via three rotations
\begin{equation}\label{3rot}
\begin{pmatrix}
X\\
Y\\
Z
\end{pmatrix}=R_{e_3}(\alpha)R_{e_2}(\beta)R_{e_3}(\gamma)\begin{pmatrix}
x\\
y\\
z
\end{pmatrix}=:R(\alpha,\beta,\gamma)\begin{pmatrix}
x\\
y\\
z
\end{pmatrix}
\end{equation}
where $(x,y,z)^T$ are the coordinates of the vector in the body fixed frame, $(X,Y,Z)^T$ are the coordinates of the vector in the space fixed frame and $R_{e_i}(\theta)\in {\rm SO}(3)$ is the rotation of angle $\theta$ around the axis $e_i$. The explicit expression of the matrix $R(\alpha,\beta,\gamma)\in {\rm SO}(3)$ is 
\begin{equation}\label{rotation}
R=\begin{pmatrix}
\cos\alpha \cos\beta \cos\gamma-\sin\alpha \sin\gamma & -\cos\alpha \cos\beta \sin\gamma-\sin\alpha \cos\gamma & \cos\alpha \sin\beta\\
\sin\alpha \cos\beta \cos\gamma+\cos\alpha \sin\gamma &-\sin\alpha \cos\beta \sin\gamma+\cos\alpha \cos\gamma & \sin\alpha \sin\beta\\
-\sin\beta \cos\gamma& \sin\beta \sin\gamma& \cos\beta
\end{pmatrix}.
\end{equation}

In Euler coordinates, the angular momentum operators are given by
 \begin{equation}\label{wigner}
\begin{cases}
\begin{aligned}
J_1&=\ii\cos\alpha \cot\beta \dfrac{\partial}{\partial \alpha}+\ii\sin\alpha \dfrac{\partial}{\partial \beta} -\ii\dfrac{\cos\alpha}{\sin \beta} \dfrac{\partial}{\partial \gamma}   ,  \\ 
J_2&=\ii\sin\alpha \cot\beta \dfrac{\partial}{\partial \alpha}-\ii\cos\alpha \dfrac{\partial}{\partial \beta} -\ii\dfrac{\sin\alpha}{\sin \beta} \dfrac{\partial}{\partial \gamma}  ,  \\
J_3&=-\ii\dfrac{\partial}{\partial \alpha}.
\end{aligned}
\end{cases}
\end{equation}
These are linear operators acting on the Hilbert space $L^2({\rm SO}(3))$, self-adjoint with respect to the Haar measure $\frac{1}{8}d\alpha d\gamma \sin\beta d\beta$. Using (\ref{wigner}), the self-adjoint operator $P_3:=-\ii\frac{\partial}{\partial \gamma}$ can be written as $P_3=\sin\beta \cos\alpha J_1+\sin\beta \sin\alpha J_2+\cos\beta J_3$, that is,
\[
P_3=\sum_{i=1}^3R_{i3}(\alpha,\beta,\gamma)J_i,
\]
where $R=(R_{ij})_{i,j=1}^3$ is given in (\ref{rotation}).

 In the same way we define $P_1=\sum_{i=1}^3R_{i1}(\alpha,\beta,\gamma)J_i,P_2=\sum_{i=1}^3R_{i2}(\alpha,\beta,\gamma)J_i$. The operators $J_i$ and $P_i$, $i=1,2,3$, are the angular momentum operators expressed in the fixed and in the body frame, respectively. Finally, we consider the square norm operator $J^2:=J_1^2+J_2^2+J_3^2=P_1^2+P_2^2+P_3^2$. Now, $J^2,J_3,P_3$ can be considered as the three commuting observables needed to describe the quantum motion of a molecule. 
Indeed, $[J^2,J_3]=[J^2,P_3]=[J_3,P_3]=0$, and hence there exists an orthonormal Hilbert basis of $L^2({\rm SO}(3))$ which diagonalizes simultaneously $J^2,J_3$ and $P_3$. In terms of Euler coordinates, this basis is made by the so-called Wigner functions 
\begin{equation}\label{explicit}
 D_{k,m}^j(\alpha,\beta,\gamma):=e^{\ii(m\alpha+k\gamma)}d_{k,m}^j(\beta), \qquad j\in \mathbb{N},\quad k,m=-j,\dots,j,
 \end{equation}
 where the function $d^j_{k,m}$ solves 
 a suitable Legendre differential equation, obtained by separation of variables (see, e.g., \cite[Section 2.5]{gordy} for the separation of variables ansatz and \cite[Chapter 4]{Varshalovich} for a detailed description of the properties of these functions).

Summarizing, the family of Wigner functions $\{ D_{k,m}^j\mid j\in \mathbb{N},k,m=-j,\dots,j \}$ forms an orthonormal Hilbert basis for $L^2({\rm SO}(3))$. 
Moreover,
\[
J^2 D_{k,m}^j=j(j+1) D_{k,m}^j, \quad J_3 D_{k,m}^j=mD_{k,m}^j, \quad P_3D_{k,m}^j=kD_{k,m}^j.
\]

Thus, $m$ and $k$ are the quantum numbers which correspond to the projections of the angular momentum on the third axis of, respectively, the fixed and the body frame.

The rotational Hamiltonian of a molecule is $H=\frac{1}{2}\Big(\frac{P_1^2}{I_1}+\frac{P_2^2}{I_2}+\frac{P_3^2}{I_3}\Big)$, which is seen here as a self-adjoint operator acting on the Hilbert space $L^2({\rm SO}(3))$. 
From now on, we impose the symmetry relation 
$I_1=I_2$, which implies that $H=\frac{J^2}{2I_2}+\Big(\frac{1}{2I_3}-\frac{1}{2I_2}\Big)P_3^2$. Thus, 
\begin{equation}\label{spectrum}
HD_{k,m}^j=\Big(\dfrac{j(j+1)}{2I_2}+\Big(\dfrac{1}{2I_3}-\dfrac{1}{2I_2}\Big)k^2\Big)D_{k,m}^j=:E_k^jD_{k,m}^j.
\end{equation}
Hence, the Wigner functions are the eigenfunctions of $H$.
Since the eigenvalues 
of $H$ do not depend on $m$, the energy level $E_k^j$ is $(2j+1)$-degenerate with respect to $m$.
This property is common to every molecule in nature: the spectrum $\sigma(H)$ does not depend on $m$, just like in classical mechanics kinetic energy does not depend on the direction of the angular momentum. 
Moreover, when $k \neq 0$ the energy level $E_k^j$ is also $2$-degenerate with respect to $k$. This extra degeneracy is actually a characterizing property of 
symmetric molecules. 
Breaking this $k$-symmetry will be one important feature of our controllability analysis. 

The interaction Hamiltonian between the dipole $\delta$ inside the molecule and the external electric field in the direction $e_i$, $i=1,2,3$, is given by the Stark effect (\cite[Chapter 10]{gordy})
\[
B_i(\alpha,\beta,\gamma)=-\langle R(\alpha,\beta,\gamma) \delta, e_i\rangle,
\]
seen as a multiplicative self-adjoint operator acting on $L^2({\rm SO}(3))$. Thus, the rotational Schr{\"o}dinger equation for a symmetric-top molecule subject to three orthogonal 
electric fields reads
\begin{equation}\label{schro}
\ii\dfrac{\partial}{\partial t} \psi(\alpha,\beta,\gamma;t)= H\psi(\alpha,\beta,\gamma;t)+\sum_{l=1}^3u_l(t)B_l(\alpha,\beta,\gamma)\psi(\alpha,\beta,\gamma;t), 
\end{equation}
with $\psi(t) \in L^2({\rm SO}(3))$ and $u(t) \in U$, for some neighborhood $U$ of $0$ in $\mathbb{R}^3$.

\subsection{Non-controllability of the quantum genuine symmetric-top}

We recall that the genuine symmetric-top molecule is a symmetric rigid body with electric dipole $\delta$ along the symmetry axis:
$\delta=(0,0,\delta_3)^T$ in the principal axis frame on the body. We then introduce the subspaces $S_k:=\overline{\mathrm{span}}\{D_{k,m}^j \mid j\in \mathbb{N}, m=-j,\dots,j\}$, where $\overline{\mathrm{span}}$ denotes the closure of the linear hull in $L^2({\rm SO}(3))$.
\begin{theorem}\label{genuine}
The quantum number $k$ is invariant in the controlled motion of the genuine symmetric-top molecule. That is, if $I_1=I_2$ and $\delta=(0,0,\delta_3)^T$, the subspaces $S_k$ are invariant for any propagator of the Schr{\"o}dinger equation (\ref{schro}).
\end{theorem}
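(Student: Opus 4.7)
The plan is to identify $S_k$ as the eigenspace of the self-adjoint operator $P_3$ associated with the eigenvalue $k$, and then prove that $P_3$ commutes with the full controlled Hamiltonian $H+\sum_l u_l B_l$ for every $u\in U$. Once this commutation is established, the one-parameter unitary groups $e^{-\ii t(H+\sum_l u_l B_l)}$ all commute with $P_3$, hence they preserve each spectral subspace of $P_3$, and so does any propagator of \eqref{schro} obtained by concatenation. Since $P_3 D^j_{k,m}=kD^j_{k,m}$, the eigenspace of $P_3$ for the eigenvalue $k$ is precisely $S_k$, yielding the claim.

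The verification of the commutation splits into two parts. First, $[P_3,H]=0$: using the symmetry assumption $I_1=I_2$ we have $H=\frac{J^2}{2I_2}+\bigl(\frac{1}{2I_3}-\frac{1}{2I_2}\bigr)P_3^2$, and $[P_3,J^2]=0$ has already been recorded in the discussion preceding \eqref{explicit} (it is what allowed the simultaneous diagonalization of $J^2,J_3,P_3$ by the Wigner functions). Trivially $[P_3,P_3^2]=0$, so the two terms commute with $P_3$. Second, $[P_3,B_l]=0$ for $l=1,2,3$ under the genuine-dipole assumption $\delta=(0,0,\delta_3)^T$: the interaction is
\[
B_l(\alpha,\beta,\gamma)=-\langle R(\alpha,\beta,\gamma)\delta,e_l\rangle=-\delta_3\, R_{l3}(\alpha,\beta,\gamma),
\]
and reading off the third column of the rotation matrix \eqref{rotation} gives $R_{13}=\cos\alpha\sin\beta$, $R_{23}=\sin\alpha\sin\beta$, $R_{33}=\cos\beta$, all independent of $\gamma$. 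Since $P_3=-\ii\,\partial/\partial\gamma$ commutes with multiplication by any function depending only on $(\alpha,\beta)$, we obtain $[P_3,B_l]=0$.

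The core structural reason for non-controllability is therefore exactly the same as in Theorem~\ref{genuinecla} at the classical level: the symmetry $I_1=I_2$ together with the alignment $\delta\parallel a_3$ kills the $\gamma$-dependence of the potential, and $P_3$ (the generator of rotations about the body symmetry axis) is preserved. The step that deserves the most care is ensuring that the operator identities between (possibly unbounded) self-adjoint operators do yield the invariance of the subspaces at the level of the unitary propagators; but the $B_l$ are bounded multiplication operators, and on the common invariant core $\mathcal{L}$ spanned by the Wigner functions everything reduces to finite-dimensional linear algebra on each $(j,k)$-block, so the commutation transfers to the one-parameter groups by Stone's theorem. Nothing further is required, and the $S_k$ are invariant under every propagator $\Gamma^u_T$ of \eqref{schro}.
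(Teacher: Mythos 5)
Your argument is correct, and it rests on the same elementary observation as the paper's proof, namely that for $\delta=(0,0,\delta_3)^T$ each $B_l=-\delta_3 R_{l3}$ depends only on $(\alpha,\beta)$; but you package it differently. The paper works at the level of matrix elements: it computes $\langle D^j_{k,m},\ii B_l D^{j'}_{k',m'}\rangle$ directly and shows it vanishes for $k\neq k'$ via the orthogonality of the factors $e^{\ii k\gamma}$, so that $H+\sum_l u_l B_l$ never couples distinct $k$-levels in the Wigner basis; the commutation $[P_3,H+\sum_l u_l B_l]=0$ appears there only afterwards, as a remark. You instead take the commutation as the primary object and deduce invariance of the spectral subspaces of $P_3$, which is conceptually tidier and makes the parallel with Theorem~\ref{genuinecla} explicit; the price is that you must handle the unbounded-operator bookkeeping yourself, which the matrix-element route sidesteps. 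On that point, one sentence of yours is inaccurate: the controlled Hamiltonian does \emph{not} reduce to ``finite-dimensional linear algebra on each $(j,k)$-block'', since $B_l$ couples $j$ to $j\pm 1$ (selection rules \eqref{rules}); the invariant blocks are the full, infinite-dimensional subspaces $S_k$. This does not damage the proof: the clean way to finish your argument is to note that $e^{\ii s P_3}$ is translation in $\gamma$, which commutes with $e^{-\ii tH}$ (both are diagonal in the Wigner basis) and with multiplication by any $\gamma$-independent function, hence with the bounded operators $B_l$; since $H+\sum_l u_l B_l$ is self-adjoint on $D(H)$ (the $B_l$ are bounded), its unitary group commutes with $e^{\ii s P_3}$ and therefore preserves each eigenspace $S_k$ of $P_3$, and so does any piecewise-constant propagator $\Gamma^u_T$ by composition.
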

\begin{proof}
We have to show that $H$ and $B_1$, $B_2$, $B_3$, 
do not couple different levels of $k$, that is,
\begin{equation}\label{alternativa}
\begin{cases}
\langle D_{k,m}^j,\ii H D_{k',m'}^{j'}\rangle_{L^2({\rm SO}(3))}=0, \quad k \neq k',    \\
\langle D_{k,m}^j, \ii B_l D_{k',m'}^{j'}\rangle_{L^2({\rm SO}(3))}=0, \quad k \neq k',\; l=1,2,3.
\end{cases}
\end{equation}
The first equation of (\ref{alternativa}) is obvious since the orthonormal basis $\{D_{k,m}^j\}$ diagonalizes $H$.
Under the genuine symmetric-top assumption, the second equation of (\ref{alternativa}) is also true:  for $l=1$ and $k\neq k'$ we compute
\begin{align*}
\langle D_{k,m}^j,&\ii B_1  D_{k',m'}^{j'}\rangle_{L^2({\rm SO}(3))}\\  ={}& -\int_0^{2\pi}d\alpha\int_0^{2\pi}d\gamma\int_0^{\pi}d\beta \sin(\beta) D_{k,m}^j(\alpha,\beta,\gamma)\ii B_1(\alpha,\beta,\gamma)\overline{D_{k',m'}^{j'}}(\alpha,\beta,\gamma)  \\
={}&\ii \delta_3\left(\int_0^{2\pi}d\gamma e^{\ii k\gamma}e^{-\ii k'\gamma}\right)\left(\int_0^{2\pi}d\alpha\cos(\alpha) e^{\ii m\alpha}e^{-\ii m'\alpha}\right)
\\ &\left(\int_0^{\pi}d\beta \sin^2(\beta)d_{k,m}^j(\beta)\overline{d_{k',m'}^{j'}}(\beta)\right) =0,
\end{align*}
using the orthogonality of the functions $e^{ik\gamma}$ and the explicit form (\ref{rotation}) of the matrix $R$, which yields 
\[B_1(\alpha,\beta,\gamma)=-\langle \begin{pmatrix}
0\\
0\\
\delta_3
\end{pmatrix},R^{-1}(\alpha,\beta,\gamma)\begin{pmatrix}
1\\
0\\
0
\end{pmatrix}\rangle=-\delta_3 \sin \beta \cos\alpha.\]
 The computations for $l=2,3$ are analogous, since the multiplicative potentials $B_l$ do not depend on $\gamma$.
\end{proof}

\begin{remark}
Equation (\ref{alternativa}) also shows that, for a genuine symmetric-top, the third component of the angular momentum $P_3$ commutes with $H$ and $B_l$, $l=1,2,3$, hence
 \[
 \Big[P_3,H+\sum_{l=1}^3u_lB_l\Big]=0, \quad \forall u\in U.       
 \]
 Thus, 
 $\langle \psi(t),P_3\psi(t)\rangle$ is a conserved quantity, where $\psi$ is the solution of $(\ref{schro})$.
\end{remark}

\subsection{Controllability of the quantum accidentally symmetric-top}
So far we have studied the dynamics of a symmetric-top molecule with electric dipole moment along its symmetry axis and  we have proven that its dynamics are trapped in 
the eigenspaces of $P_3$. 

Nevertheless, for applications to molecules charged in the laboratory, or to particular molecules present in nature such as $D_2S_2$ (Figure \ref{d2s2}) or $H_2S_2$, it is interesting to consider also the case in which the dipole 
 is not along 
the symmetry axis: this case is called the \emph{accidentally symmetric molecule}.

Under a non-resonance condition, we are going to prove that, if the dipole moment is not orthogonal to the symmetry axis of the molecule, the rotational dynamics of an accidentally symmetric-top are approximately controllable. To prove this statement, we are going to apply 
Theorem \ref{LGTC} 
to \eqref{schro}.

\begin{theorem}\label{rare}
Assume that $I_1=I_2$ and $\frac{I_2}{I_3}\notin \mathbb{Q}$. If $\delta = (\delta_1,\delta_2,\delta_3)^T$ is such that $\delta \neq (0,0,\delta_3)^T$ and $\delta \neq (\delta_1,\delta_2,0)^T$, then system 
(\ref{schro}) is an m-tracker, and in particular approximately controllable.
\end{theorem}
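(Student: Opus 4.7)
The plan is to apply Theorem~\ref{LGTC} to~\eqref{schro}, taking for $\mathcal{B}$ the Wigner basis $\{D^j_{k,m}\}$. Two features of the spectrum dominate the analysis: the eigenvalues $E^j_k$ are independent of $m$ (so each $(j,k)$-level is $(2j+1)$-degenerate), and they satisfy $E^j_k = E^j_{-k}$ (a further twofold degeneracy when $k\ne 0$). The irrationality assumption $I_2/I_3 \notin \mathbb{Q}$ is used to split the spectral gaps into their $j(j+1)$-part and their $k^2$-part: one checks that $E^j_k-E^{j'}_{k'}= E^{\tilde j}_{\tilde k}-E^{\tilde j'}_{\tilde k'}$ iff $j(j+1)-j'(j'+1)=\tilde j(\tilde j+1)-\tilde j'(\tilde j'+1)$ and $k^2-k'^2=\tilde k^2-\tilde k'^2$. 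Using the explicit form~\eqref{rotation} of $R$, direct integration in $\alpha$, $\beta$ and $\gamma$ yields the selection rules $\Delta j\in\{-1,0,+1\}$ throughout; the $\delta_3$-part of each $B_l$ preserves $k$ while the $\delta_1,\delta_2$-parts produce $\Delta k=\pm 1$; and $B_3$ preserves $m$ while $B_1, B_2$ produce $\Delta m=\pm 1$.

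I would then choose blocks $I_j$ made of a finite window of Wigner indices with bounded $j$, $k$, and $m$, arranged so that consecutive blocks overlap (ensuring connectivity of $\mathcal{G}$) and so that their union exhausts the basis. Within each block, the identity $\mathcal{T}_j=\mathfrak{su}(n_j)$ is established in two steps. The first step, which mirrors the linear-top analysis recalled in Appendix~\ref{appendixA}, uses $\delta_3\ne 0$: since $\delta_3 B_3$ and $\delta_3 B_{1,2}$ induce $k$-preserving transitions with $\Delta m=0$ and $\Delta m=\pm 1$ respectively, a combination of frequencies resonant with $(j,k)\to(j+1,k)$ transitions, coupled with the Lie bracket techniques of~\cite{BCS}, recovers the $\mathfrak{su}$-action within each fixed-$k$ slice. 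The second step, specific to the symmetric-top, exploits $\delta_1^2+\delta_2^2>0$ to link different $k$-sectors via a three-wave mixing scheme (Appendix~\ref{appendixB}): two frequencies tuned to $(j,k)\to(j',k+1)$ and $(j,k)\to(j'',k-1)$ transitions, combined via Lie brackets with the $\delta_3$-induced operators from the first step, yield a generator separating $D^j_{k,m}$ from $D^j_{-k,m}$ and eventually span the remaining directions in $\mathfrak{su}(n_j)$.

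The main obstacle I expect lies in the second step. Because of the $k\leftrightarrow -k$ degeneracy $E^j_k=E^j_{-k}$, a single frequency resonant with a $(j,k)\to(j',k+1)$ transition is simultaneously resonant with the reflected transition $(j,-k)\to(j',-k-1)$, so one must arrange the relative phases of the controls and the relative weights of $\delta_1$ and $\delta_2$ so that the resulting generator distinguishes $D^j_{k,m}$ from $D^j_{-k,m}$. This is precisely the point where one needs \emph{both} assumptions $\delta_1^2+\delta_2^2\ne 0$ and $\delta_3\ne 0$: the $\delta_3$-part fixes relative phases on the $m$-manifold at each $(j,k)$ level, while the $\delta_{1,2}$-part provides the inter-$k$ coupling; without $\delta_3$ an extra $k\leftrightarrow -k$ symmetry survives (leading to Theorem~\ref{accidentally}), and without $\delta_{1,2}$ the number $k$ is conserved (Theorem~\ref{genuine}). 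Finally, one must verify that the out-of-block resonances activated by the chosen frequencies fall into $\Xi_j^0$, or at worst $\Xi_j^1$, so that $\mathcal{T}_j$ rather than merely $\mathrm{Lie}(\nu_j^1)$ fills $\mathfrak{su}(n_j)$; this dictates a careful adaptation of the block size $I_j$ to the coincidence pattern of spectral gaps delivered by the non-resonance hypothesis $I_2/I_3\notin\mathbb{Q}$.
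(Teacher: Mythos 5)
Your overall skeleton matches the paper's strategy: apply Theorem~\ref{LGTC} in the Wigner basis, use $I_2/I_3\notin\mathbb{Q}$ to separate the $j(j+1)$- and $k^2$-parts of the spectral gaps, work block-wise on overlapping shells, and break the $k\leftrightarrow-k$ degeneracy by a three-wave mixing using both $\delta_3\neq0$ and $\delta_1^2+\delta_2^2\neq0$. But your first step contains a genuine gap. You propose to use the $\delta_3$-driven, $k$-preserving transitions $(j,k)\to(j+1,k)$ to ``recover the $\mathfrak{su}$-action within each fixed-$k$ slice'' as in the linear-top analysis. This cannot work as stated: the corresponding gap $\sigma^j=\frac{j+1}{I_2}$ is independent of $k$, so the excited mode $\mathcal{E}_{\sigma^j}(\ii B_l)$ is a single matrix summing the linear-top-type generators over \emph{all} $k$-slices simultaneously, with coefficients $a_{j,k,m},b_{j,k,m}$ depending on $k$ only through $k^2$. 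Consequently every element of the Lie algebra generated by these modes acts identically on the $k$ and $-k$ slices and cannot isolate a single slice; the restriction-to-$S_k$ argument is legitimate only for the genuine top, where $S_k$ is invariant. In the paper the first step is the opposite one: it starts from the $k$-dependent gaps $\lambda_k^j$ (driven by $\delta_1,\delta_2$), which excite only the reflected pair $(j,k)\to(j+1,k+1)$ and $(j,-k)\to(j+1,-k-1)$, and Proposition~\ref{propA} resolves the full $m$-degeneracy while keeping the $\pm k$ coupling explicitly as ``difference'' matrices $X_{(j,k,m),(j+1,k+1,m')}-X_{(j,-k,m),(j+1,-k-1,m')}$. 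Only afterwards are $\sigma^j$ and $\eta_k$ brought in, and since $(\eta_k,l)$ lies only in $\Xi_j^1$, their modes may be used exclusively through double brackets with elements already obtained, i.e.\ inside the ideal $\mathcal{T}_j$ — a constraint your plan mentions only as an afterthought rather than building it into the algebraic scheme; the three-wave mixing then combines the ``sum'' matrices so produced with the ``difference'' matrices of the first step to split $k$ from $-k$.

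A second, smaller issue is the block choice. You leave $I_j$ as ``a finite window with bounded $j$, $k$ and $m$'', but no frequency distinguishes different $m$'s (the gaps are $m$-independent) nor $k$ from $-k$, so any block truncated inside an $m$-multiplet or a $\pm k$ doublet would have states in the block connected by the relevant gaps to states outside it, pushing those gaps out of $\Xi_j^0\cup\Xi_j^1$ and leaving $\mathcal{T}_j$ too small. The only workable choice, and the one the paper makes, is the complete double shell $\mathcal{M}_j=\mathcal{H}_j\oplus\mathcal{H}_{j+1}$ containing all $k,m$ with $l=j,j+1$. So while your high-level plan points in the right direction, the step that is supposed to deliver $\mathfrak{su}$ on fixed-$k$ slices fails for the accidental top, and the actual mechanism resolving the $m$- and $\pm k$-degeneracies (the Vandermonde/iterated-bracket computations of Appendices~\ref{appendixA} and~\ref{appendixB}) is missing.
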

\begin{proof}
First of all, one can check, for example in \cite[Table 2.1]{gordy}, that the pairings induced by the interaction Hamiltonians satisfy
\begin{equation}\label{rules}
\langle D_{k,m}^j , \ii B_l D_{k',m'}^{j'} \rangle=0,
\end{equation}
when $|j'-j|>1$, or $|k'-k|>1$ or $|m'-m|>1$, for every $l=1,2,3$. Equation (\ref{rules}) is the general form of the so-called selection rules.

We then define 
for every $j \in \mathbb{N}$ the set $I_j:=\{\rho(l,k,m)\mid l=j,j+1, \; k,m=-l,\dots,l\}\subset \mathbb{N}$, where $\rho: \{(l,k,m) \mid l\in \mathbb{N},k,m=-l,\dots,l\}\rightarrow \mathbb{N}$ is the lexicographic ordering.
The graph $\mathcal{G}$ whose vertices are the sets $I_j$ and whose edges are 
$\{(I_j,I_{j'})\mid I_j\cap I_{j'} \neq \emptyset\}=\{ (I_j,I_{j+1})\mid j\in\mathbb{N}\}$ is 
linear. 
In order to apply Theorem~\ref{LGTC} we 
shall 
consider the projection of \eqref{schro} onto each space $\mathcal{M}_j:=\mathcal{H}_j \oplus \mathcal{H}_{j+1}$, where $\mathcal{H}_l:= \mathrm{span}\{D_{k,m}^l \mid k,m=-l,\dots,l\}$. The dimension of $\mathcal{M}_j$ is $(2j+1)^2+(2(j+1)+1)^2$, and we identify $\mathfrak{su}(\mathcal{M}_j)$ with $\mathfrak{su}((2j+1)^2+(2(j+1)+1)^2)$. 

According to \eqref{rules}, the three types of spectral gaps in $\mathcal{M}_j$, $j\in \mathbb{N}$, which we should consider are
\begin{equation}\label{usami}
\lambda_k^j:= |E_{k+1}^{j+1}-E_k^j| =\Big| \frac{j+1}{I_2}+\Big(\frac{1}{2I_3}-\frac{1}{2I_2}\Big)(2k+1)\Big|, \quad k=-j,\dots,j, 
\end{equation}
 corresponding to pairings for which both $j$ and $k$ move (see Figure \ref{lambda}), 
 \begin{equation}\label{usami1}
 \eta_k:= |E_{k+1}^j-E_k^j| =\Big| \Big(\dfrac{1}{2I_3}-\dfrac{1}{2I_2}\Big)(2k+1) \Big|, \quad k=-j,\dots,j,  
 \end{equation}
and 
\begin{equation}\label{usami2}
\sigma^j:= |E_k^{j+1}-E_k^j| = \dfrac{j+1}{I_2}, \quad k=-j,\dots,j,
\end{equation}
for which, respectively,  only $k$ or $j$ moves (see, 
Figures \ref{transitionsfigure}\subref{eta} and \ref{transitionsfigure}\subref{sigma}).
\begin{figure}[ht!]\begin{center}
\includegraphics[width=0.5\linewidth, draft = false]{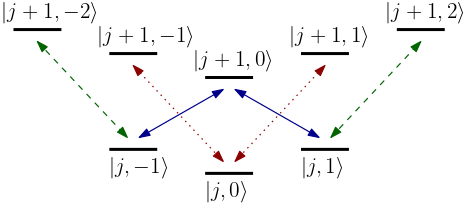}
\caption{Graph of the transitions associated with the frequency $\lambda_k^j$ between eigenstates $| j,k\rangle=| j,k,m\rangle:=D_{k,m}^j$ ($m$ fixed). Same-shaped arrows correspond to equal spectral gaps.} \label{lambda}
\end{center}\end{figure}
\begin{figure}[ht!]
\subfigure[]{
\includegraphics[width=0.47\linewidth, draft = false]{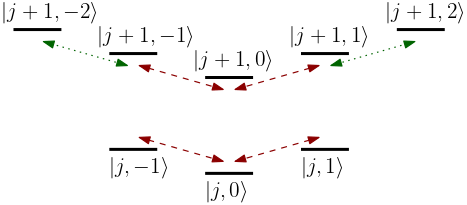} \label{eta} }
\subfigure[]{
\includegraphics[width=0.47\linewidth, draft = false]{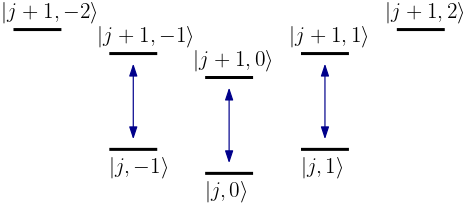} \label{sigma} }
\caption{Transitions between states: 
\subref{eta} at frequency $\eta_k$; \subref{sigma} at frequency $\sigma^j$. Same-shaped arrows correspond to equal spectral gaps.} \label{transitionsfigure}\end{figure}

We now classify the spectral gaps 
in terms of the sets $\Xi_j^0$ and $\Xi_j^1$ introduced in Section~\ref{notations}.
\begin{lemma}
\label{ext}
Let $I_2/I_3\notin \mathbb{Q}$. Then $(\lambda_k^j,l), (\sigma^j,l)\in \Xi_j^0$, and $(\eta_k,l) \in \Xi_j^1$, for all $k=-j,\dots,j$, $l=1,2,3$.
\end{lemma}
\begin{proof}
Because of the selection rules \eqref{rules}, we only need to check if there are common spectral gaps in the spaces $\mathcal{M}_j$ and $\mathcal{M}_{j'}$ for $j'\ne j$.

We start by 
proving that $(\lambda_k^j,l), (\sigma^j,l)\in \Xi_j^0$ by showing that 
a spectral gap of the type $\lambda_k^j$ (respectively, $\sigma^j$)
is different from any  spectral gap of the type $\lambda_{k'}^{j'}$, $\sigma^{j'}$, or 
$\eta_{k'}$ unless $\lambda_k^j=\lambda_{k'}^{j'}$ and $(k,j)=(k',j')$ (respectively, $\sigma^j=\sigma^{j'}$ and $j=j'$).

Using the explicit structure of the spectrum (\ref{spectrum}), any spectral gap of the type
$\lambda_{k'}^{j'}$, $\sigma^{j'}$, or 
$\eta_{k'}$ can be written as 
\[ \Big|\frac{q_1}{I_2}+q_2\Big(\dfrac{1}{I_3}-\dfrac{1}{I_2}\Big)\Big|,\qquad q_1,q_2\in \mathbb{Q}.\]
 Since, moreover,
  $\frac {1}{I_2}$ and $\left(\frac1{I_3}-\frac1{I_2}\right)$ are $\mathbb{Q}$-linearly independent,
  one easily deduces that, indeed,  $(\lambda_k^j,l), (\sigma^j,l)\in \Xi_j^0$.

Notice that the gaps of the type 
$\eta_k$ correspond to internal pairings  in the spaces ${\cal H}_j$. Henceforth, 
in order to prove that 
$(\eta_k,l) \in \Xi_j^1$ it is enough to check that $\eta_k$ is different from any gap of the type 
 $\lambda_{k'}^j,\sigma^j$. This fact has already been noticed 
 in the proof of the first part of the statement. The proof of the lemma is then concluded. 
 \end{proof}

Next, we introduce the family of 
excited modes 
associated with the spectral gap $\lambda_k^j$, that is,
\[
\mathcal{F}_j:=\{\mathcal{E}_{\lambda_k^j}(\ii B_l), W_\ii(\mathcal{E}_{\lambda_k^j}(\ii B_l)) \mid  l=1,2,3, \; k=-j,\dots,j \}, 
\]
where the operators $\mathcal{E}_\mu$ and  $W_\xi$ 
are defined in 
Section~\ref{notations}, and where, with a slight abuse of notation, we write $B_l$ instead of $\Pi_{\mathcal{M}_j}B_l\Pi_{\mathcal{M}_j}$.
Notice that $\mathcal{F}_j\subset \nu_j^0$ as it follows from Lemma~\ref{ext}, where $\nu_j^0$
is defined as in \eqref{eq:modes}.

In order to write down the matrices in $\mathcal{F}_j$, we need to study the resonances 
between the spectral gaps inside $\mathcal{M}_j$. 
We claim that there are no 
internal resonances except those due to 
the degeneracy
$E_k^j=E_{-k}^j$. 
Indeed, we 
already noticed in Lemma~\ref{ext}
that a spectral gap of the type $\lambda_k^j$ 
is different from any  spectral gap of the type $\lambda_{k'}^{j'}$, $\sigma^{j'}$, or 
$\eta_{k'}$ unless $\lambda_k^j=\lambda_{k'}^{j'}$ and $(k,j)=(k',j')$.
We collect in the lemma below also the similar observations that 
$\sigma^j$ is different from any  spectral gap of the type $\lambda_{k'}^{j'}$, $\sigma^{j'}$, or 
$\eta_{k'}$ unless
$\sigma^j=\sigma^{j'}$ and $j=j'$, and that $\eta_{k}\ne \eta_{k'}$ if $k\ne k'$. 

\begin{lemma}
\label{important}
Let $I_2/I_3\notin \mathbb{Q}$.  Then
\begin{enumerate}
\item\textbf{$\lambda_k^j$-resonances:} the equation 
\begin{equation*}
 |E_{k+1}^{j+1}-E_k^j|=|E_{s+h}^{j''}-E_{s}^{j'}|,\ j\le j'\leq j''\le j+1,\  -j'\le s\le j',\ h\in \{-1,0,1\},
 \end{equation*}

implies that $j'=j$, $j''=j+1$, $s=\pm k$, $s+h=\pm(k+1)$;
\item\textbf{$\eta_k$-resonances:} the equation 
\begin{equation*}
 |E_{k+1}^j-E_k^j|=|E_{s+h}^{j''}-E_{s}^{j'}|,\ j\le j'\leq j''\le j+1,\  -j'\le s\le j',\ h\in \{-1,0,1\}, \end{equation*}
implies that $j'=j''=j$ or $j'=j''=j+1$ and $s=\pm k$, $s+h=\pm(k+1)$;
\item\textbf{$\sigma^j$-resonances:} the equation 
\begin{equation*}
 |E_k^{j+1}-E_k^j|=|E_{s+h}^{j''}-E_{s}^{j'}|,\ j\le j'\leq j''\le j+1,\  -j'\le s\le j',\ h\in \{-1,0,1\}, \end{equation*}
implies that $j'=j$, $j''=j+1$, $h=0$, $s=\pm k$.
\end{enumerate}
\end{lemma}

Denote by $\mathrm{L}_j:=\mathrm{Lie}(\mathcal{F}_j)$ the Lie algebra generated by the matrices in $\mathcal{F}_j$. Let us introduce the generalized Pauli matrices 
$$G_{j,k}=e_{j,k}-e_{k,j}, \quad F_{j,k}=\ii e_{j,k}+\ii e_{k,j},\quad D_{j,k}=\ii e_{j,j}-\ii e_{k,k},$$
where $e_{j,k}$ denotes the $(2j+1)^2+(2(j+1)+1)^2$-square matrix whose entries are all zero, except the one at row $j$ and column $k$, which is equal to $1$. Consider again the lexicographic ordering $\rho:\{(l,k,m) \mid l=j,j+1,\;k,m=-l,\dots,l\}\rightarrow \mathbb{N}$. By a slight abuse of notation, also set $e_{(l,k,m),(l',k',m')}=e_{\rho(l,k,m),\rho(l',k',m')}$. The analogous identification can be used to define $G_{(l,k,m),(l',k',m')}, F_{(l,k,m),(l',k',m')}, D_{(l,k,m),(l',k',m')}$.
The next proposition tells us how the elements in $\mathrm{L}_j$ look like. For a
 proof, see Appendix~\ref{appendixA}.

\begin{proposition}
\label{propA}
Let $m=-j,\dots,j$ and $k=-j,\dots,j$ with $k\neq 0$.
Then the matrices $X_{(j,k,m),(j+1,k+1,m)}-X_{(j,-k,m),(j+1,-k-1,m)}$ and $X_{(j,k,m),(j+1,k+1,m\pm1)}-X_{(j,-k,m),(j+1,-k-1,m\pm1)}$ are in $\mathrm{L}_j$, where $X\in \{G,F\}$.
\end{proposition}

To break the degeneracy between $k$ and $-k$ which appears in the 
matrices that we found in 
Proposition~\ref{propA}, and obtain all the elementary matrices that one needs to generate $\mathfrak{su}(\mathcal{M}_j)$, we need to exploit the other two types of spectral gaps that we 
have introduced in \eqref{usami1} and \eqref{usami2} (see Figure \ref{transitionsfigure}). 

Let us introduce the family of
excited modes at the frequencies $\sigma^j$ and $\eta_k$,
\begin{align*}
\mathcal{P}_j:=\{\mathcal{E}_{\sigma^j}(\ii B_l), W_\ii(\mathcal{E}_{\sigma^j}(\ii B_l)),\mathcal{E}_{\eta_k}(\ii B_l), W_\ii(\mathcal{E}_{\eta_k}(\ii B_l)) \mid  l=1,2,3, \; k=-j,\dots,j\},
\end{align*}
and notice that, by Lemma~\ref{ext},  
$\mathcal{P}_j \subset \nu_j^1$ (cf.~\eqref{eq:modes}).
Therefore, 
\[
\widetilde{\mathcal{P}}_j:=\{A, [B,C] \mid A,B \in \mathrm{L}_j, C \in \mathcal{P}_j\} \subset \mathcal{T}_j,
\]
where 
we recall that 
$\mathcal{T}_j$ is the minimal ideal of $ \mathrm{Lie}(\nu_j^1)$ containing $\nu_j^0$.

The following proposition, whose proof is given in Appendix~\ref{appendixB}, concludes the proof of Theorem~\ref{rare}.
\begin{proposition}\label{su}
$\mathrm{Lie}(\widetilde{\mathcal{P}}_j)=\mathfrak{su}(\mathcal{M}_j)$.
\end{proposition}

\end{proof}

\begin{remark}
The assumption  $I_2/I_3\notin \mathbb{Q}$  on the moments of inertia appearing in Theorem~\ref{rare} is technical, and prevents the system from having both external resonances (as we saw in Lemma~\ref{ext}) and internal ones (
Lemma~\ref{important}). Anyway, we have not proven that controllability fails if the ratio $I_2/I_3$ is 
rational. 
\end{remark}

\subsection{Reachable sets of the quantum genuine symmetric-top}
In (\ref{alternativa}) we see that, when $\delta = (0,0,\delta_3)^T$, transitions $k \rightarrow k'$ are forbidden if $k \neq k'$. Thus, if the quantum system is prepared in the initial state $\psi(0)$ with $P_3 \psi(0)=k\psi(0)$, the wave function $\psi$ evolves in the subspaces  $S_k=\overline{\mathrm{span}}\{D_{k,m}^j \mid j \in \mathbb{N},m=-j,\dots,j\}$. The next theorem tells us that 
the restriction of (\ref{schro}) to this subspace is approximately controllable.

\begin{theorem}
Let $I_1=I_2$ and fix $k\in \mathbb{Z}$. If $\delta = (0,0,\delta_3)^T$, $\delta_3\neq 0$, then the Schr{\"o}dinger equation (\ref{schro}) is an m-tracker in the Hilbert space $S_k$. In particular, $\mathrm{Reach}(\psi)$ is dense in  $S_k \cap \mathcal{S}$ for all $\psi \in S_k \cap  \mathcal{S}$.
\end{theorem}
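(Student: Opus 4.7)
By Theorem~\ref{genuine}, the subspace $S_k$ is invariant under every propagator of \eqref{schro}, so the restriction of the system to $S_k$ is well-defined; I plan to apply Theorem~\ref{LGTC} to this restricted system. Assumption~($\mathbb{A}$) carries over easily: on $S_k$ the Hamiltonian $H$ has discrete spectrum $\{E_k^j\mid j\geq |k|\}$ with infinitely many distinct eigenvalues, each Wigner function $D_{k,m}^j$ is smooth hence in the domain of the multiplicative operators $B_l$, and essential self-adjointness on $\mathcal{L}_k:=\mathrm{span}\{D_{k,m}^j\mid j\geq|k|,\,m=-j,\dots,j\}$ is inherited from the full space $L^2({\rm SO}(3))$.

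For the block decomposition I would choose, analogously to the proof of Theorem~\ref{rare}, $I_j:=\{\rho(l,k,m)\mid l=j,j+1,\;m=-l,\dots,l\}$ for $j\geq |k|$, where $\rho$ is the lexicographic ordering on the basis of $S_k$. Then $\mathcal{M}_j$ is spanned by the Wigner functions with $l\in\{j,j+1\}$ and the fixed value of $k$, and has dimension $(2j+1)+(2j+3)=4j+4$. Consecutive blocks overlap on the level $l=j+1$, so the associated graph $\mathcal{G}$ is a half-line, in particular connected.

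The crucial simplification due to fixing $k$ is that inside $\mathcal{M}_j$ there is only one non-zero spectral gap, namely $\sigma^j=(j+1)/I_2=E_k^{j+1}-E_k^j$; no analogue of $\lambda_k^j$ or $\eta_k$ appears, since both would require a transition $k\to k'$ with $k\neq k'$, which is forbidden in $S_k$. Because $\sigma^j$ is strictly increasing in $j$, the selection rules~\eqref{rules} force every matrix element of $B_l$ at frequency $\sigma^j$ to couple only the $l=j$ and $l=j+1$ levels of $S_k$, both of which lie in $I_j$. Therefore $(\sigma^j,l)\in \Xi_j^0$ for every $l=1,2,3$ and every $j\geq|k|$.

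It remains to show that $\mathcal{T}_j=\mathfrak{su}(\mathcal{M}_j)$, which is the main computational step and the only real obstacle. I would proceed exactly as in the linear-top case treated in Appendix~\ref{appendixA}: the excited modes $\mathcal{E}_{\sigma^j}(\ii B_l)$ and their phase-modulations $W_\ii(\mathcal{E}_{\sigma^j}(\ii B_l))$ all belong to $\nu_j^0\subset \mathcal{T}_j$ and have, relative to the two-level decomposition of $\mathcal{M}_j$ given by the $H$-eigenspaces of $E_k^j$ and $E_k^{j+1}$, a purely off-diagonal block form whose entries are $\langle D_{k,m}^j,\ii B_l D_{k,m'}^{j+1}\rangle$. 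Taking Lie brackets of pairs of such off-diagonal matrices produces the block-diagonal corrections that separate the two $H$-eigenspaces, and iterating generates all the generalized Pauli matrices needed to span $\mathfrak{su}(\mathcal{M}_j)$. The argument is structurally identical to the linear-top analysis; the only point that requires a separate check is that the Clebsch--Gordan coefficients governing $\langle D_{k,m}^j,B_l D_{k,m'}^{j+1}\rangle$ (see \cite[Table 2.1]{gordy}) remain nonzero throughout the range of indices needed by the generation scheme, which can be verified directly from their closed-form expressions. Once this is done, Theorem~\ref{LGTC} yields the m-tracker property of the restricted system, and in particular the density of $\mathrm{Reach}(\psi)$ in $S_k\cap \mathcal{S}$ for every $\psi\in S_k\cap \mathcal{S}$.
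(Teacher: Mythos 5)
Your proposal is correct and follows essentially the same route as the paper: restrict to $S_k$, use the blocks spanned by the fixed-$k$ Wigner functions with $l=j,j+1$, observe that the only relevant gap is $\sigma^j=(j+1)/I_2$ and that $(\sigma^j,l)\in\Xi_j^0$, then apply Theorem~\ref{LGTC}. The only difference is cosmetic: for the key step $\mathcal{T}_j=\mathfrak{su}(\mathcal{M}_j)$ the paper simply notes that the projection onto $\mathcal{N}_{j,k}$ is isomorphic to the corresponding projection for the linear molecule and invokes the result already proved in \cite[Section 3.3]{BCS}, whereas you sketch rerunning that generation argument by hand, which amounts to the same computation.
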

\begin{proof}

For every integer $j \ge |k|$, let $I_{j,k}:=\{\rho(l,m)\mid l=j,j+1, \; m=-l,\dots,l\}$, where $\rho: \{(l,m) \mid l\ge |k|,\;m=-l,\dots,l\}\rightarrow \mathbb{N}$ is the lexicographic ordering. 
Then the graph  $\mathcal{G}_k$ with vertices $\{I_{j,k}\}_{j=|k|}^\infty$ and edges
$\{(I_{j,k},I_{j',k})\mid I_{j,k}\cap I_{j',k} \neq \emptyset\}$ is linear.

In order to apply Theorem~\ref{LGTC} to the restriction of  \eqref{schro} to $S_k$, we should consider the projected dynamics onto $\mathcal{N}_{j,k}:= \mathcal{L}_{j,k} \oplus  \mathcal{L}_{j+1,k} $, where $\mathcal{L}_{l,k}:= \mathrm{span}\{D_{k,m}^l\mid m=-l,\dots,l\}$.
The only spectral gaps in $S_k$ are $\sigma^j=|E_k^{j+1}-E_k^j|=\frac{j+1}{I_2}$, $j\ge |k|$. Notice that $(\sigma^j,l)\in \Xi_j^0$.

We write the electric potentials projected onto $\mathcal{N}_{j,k}$:
\begin{align*}
&\mathcal{E}_{\sigma^j}(\ii B_1)=\sum_{m=-j,\dots,j}a_{j,k,m}\delta_3G_{(j,k,m),(j+1,k,m+1)}+a_{j,k,-m}\delta_3G_{(j,k,m),(j+1,k,m-1)}, \\
&\mathcal{E}_{\sigma^j}(\ii B_2)=\sum_{m=-j,\dots,j}a_{j,k,m}\delta_3F_{(j,k,m),(j+1,k,m+1)}-a_{j,k,-m}\delta_3F_{(j,k,m),(j+1,k,m-1)}, \\
&\mathcal{E}_{\sigma^j}(\ii B_3)=\sum_{m=-j,\dots,j}-b_{j,k,m}\delta_3F_{(j,k,m),(j+1,k,m)},
\end{align*}
having used the explicit pairings \eqref{kk}, which can be found in Appendix~\ref{appendixB}, and which describe the transitions excited by the frequency $\sigma^j$.
Note that here the sum does not run over $k$ since we are considering the dynamics restricted to $S_k$.
We consider the family of excited modes 
$$\mathcal{F}_{j,k}=\{\mathcal{E}_{\sigma^j}(\ii B_l), W_\ii(\mathcal{E}_{\sigma^j}(\ii B_l)) \mid l=1,2,3\}\subset \nu_j^0.$$

We claim that the Lie algebra generated by $\mathcal{F}_{j,k}$, seen as a subset of $\mathfrak{su}((2j+1)^2+(2(j+1)+1)^2)$, is equal to $\mathfrak{su}((2j+1)^2+(2(j+1)+1)^2)$.
Such an identity has been proved in \cite[Section 3.3]{BCS}, since the projection
to $\mathcal{N}_{j,k}$ is isomorphic to an analogous projection for the linear molecule. Hence, we conclude that  system \eqref{schro} is an m-tracker in $S_k$. 
\end{proof}

\subsection{Non-controllability of the quantum orthogonal accidentally symmetric-top}
Let us consider separately  the case where $\delta=(\delta_1,\delta_2,0)^T$,  left out by Theorem~\ref{rare}.
 The situation in which the dipole lies in the plane orthogonal to the symmetry axis of the molecule (that is, the \emph{orthogonal} accidentally symmetric-top) is interesting from the point of view of chemistry, since the accidentally symmetric-top molecules present in nature are usually of that kind (see Figure \ref{d2s2}). 
\begin{figure}[ht!]\begin{center}
\includegraphics[width=0.3\linewidth, draft = false]{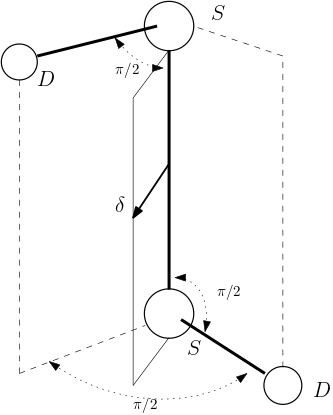}
\caption{Diagram of the orthogonal accidentally symmetric-top approximation of the molecule $D_2S_2$. The electric dipole $\delta$ lies in the orthogonal plane to the symmetry axis.}\end{center} \label{d2s2}
\end{figure}
In order to study this problem, let us introduce the Wang functions
{\cite[Section 7.2]{gordy}}
\[
S_{0,m,0}^j:=D_{0,m}^j,,\qquad 
S_{k,m,\gamma}^j:=\dfrac{1}{\sqrt{2}}(D_{k,m}^j+(-1)^\gamma D_{-k,m}^j), \quad k=1,\dots,j,
\]
for $j\in \mathbb{N}$, $m=-j,\dots,j$, and $\gamma=0,1$. Due to the $k$-degeneracy $E_k^j=E_{-k}^j$ in the spectrum of the rotational Hamiltonian $H$, the functions $S_{k,m,\gamma}^j$ still form an orthogonal basis of eigenfunctions of $H$. Then we consider the change of basis $D_{k,m}^j \rightarrow e^{-\ii k\theta}D_{k,m}^j$, and we choose $\theta \in [0,2\pi)$ such that 
\begin{equation}\label{thetachange}
\begin{cases}
e^{-\ii \theta}(\delta_2+\ii\delta_1)=\ii \sqrt{\delta_1^2+\delta_2^2},  & \\ 
e^{\ii \theta}(\delta_2-\ii\delta_1)=-\ii \sqrt{\delta_1^2+\delta_2^2}. &
\end{cases}
\end{equation}
System \eqref{thetachange} describes the rotation of angle $\mp \theta$ in the complex plane of the vector $\delta_2\pm \ii\delta_1$.
The composition of these two changes of basis gives us the rotated Wang states $S_{k,m,\gamma}^j(\theta):=\frac{1}{\sqrt{2}}(e^{-\ii k\theta}D_{k,m}^j+(-1)^\gamma e^{\ii k\theta} D_{-k,m}^j)$, for $k\neq 0$, and $S_{0,m,0}^j=D_{0,m}^j$.

In the next theorem
we express in this new basis
 a symmetry which prevents the system from being approximately controllable.

\begin{theorem}\label{accidentally}
Let $I_1=I_2$ and $\delta = (\delta_1,\delta_2,0)^T$.
Then the parity of $j+\gamma+k$ is conserved, that is, 
the spaces $\overline{\mathrm{span}}\{S_{k,m,\gamma}^j\mid j+\gamma+k\mbox{ is odd}\}$ and $\overline{\mathrm{span}}\{S_{k,m,\gamma}^j\mid j+\gamma+k\mbox{ is even}\}$
are invariant for the propagators of \eqref{schro}.
\end{theorem}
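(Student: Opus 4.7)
The plan is to exhibit a unitary operator $\mathcal{P}$ on $L^2(\mathrm{SO}(3))$ which commutes with $H$ and with every $B_l$, $l=1,2,3$, and whose eigenvalues on the rotated Wang states are precisely $(-1)^{j+k+\gamma}$; the two asserted invariant subspaces are then the $\pm 1$ eigenspaces of $\mathcal{P}$. The natural candidate is a body-frame $\pi$-rotation of $C_2$ type around the dipole axis after the frame rotation.

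More concretely, the change of basis $D_{k,m}^j\mapsto e^{-\ii k\theta}D_{k,m}^j$ corresponds to the right-translation $R\mapsto R\cdot R_{e_3}(-\theta)$, which replaces the dipole $\delta$ by $\tilde\delta=R_{e_3}(\theta)\delta$; choosing $\theta$ so that $\tilde\delta=(0,r,0)^T$ with $r=\sqrt{\delta_1^2+\delta_2^2}$, the rotated Wang states $S_{k,m,\gamma}^j(\theta)$ become the standard Wang states in this rotated frame. I then define the unitary operator $(\mathcal{P}\psi)(R):=\psi(R\cdot R_{e_2}(\pi))$, where $R_{e_2}(\pi)=\mathrm{diag}(-1,1,-1)$. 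A direct Euler-angle calculation shows that $R\cdot R_{e_2}(\pi)$ has angles $(\alpha+\pi,\pi-\beta,\pi-\gamma)$, hence $(\mathcal{P}\psi)(\alpha,\beta,\gamma)=\psi(\alpha+\pi,\pi-\beta,\pi-\gamma)$.

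Using the standard Wigner identity $d_{k,m}^j(\pi-\beta)=(-1)^{j-m}d_{-k,m}^j(\beta)$, one computes
\[
\mathcal{P}D_{k,m}^j=(-1)^{j+k}D_{-k,m}^j,
\]
which combined with the definition of the Wang functions yields $\mathcal{P}S_{k,m,\gamma}^j=(-1)^{j+k+\gamma}S_{k,m,\gamma}^j$, as desired. The commutation $[\mathcal{P},H]=0$ is immediate: writing $H=\frac{J^2}{2I_2}+\bigl(\frac{1}{2I_3}-\frac{1}{2I_2}\bigr)P_3^2$, both $J^2$ and $P_3^2$ are invariant under $k\mapsto-k$, and in fact $H$ is diagonal in the Wigner basis with eigenvalues depending only on $j$ and $k^2$. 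The commutation $[\mathcal{P},B_l]=0$ reduces to the geometric identity $R_{e_2}(\pi)\tilde\delta=\tilde\delta$: since $\tilde\delta=(0,r,0)^T$ is fixed by $R_{e_2}(\pi)$, we get $B_l(R\cdot R_{e_2}(\pi))=-\langle RR_{e_2}(\pi)\tilde\delta,e_l\rangle=-\langle R\tilde\delta,e_l\rangle=B_l(R)$, so the multiplication operator $B_l$ commutes with $\mathcal{P}$.

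Putting everything together, $\mathcal{P}$ commutes with $H+\sum_{l=1}^3 u_l B_l$ for every admissible $u$; it is therefore preserved along the dynamics and its two eigenspaces are invariant for every propagator of \eqref{schro}, coinciding with the two closed spans stated in the theorem. The most delicate step will be the Euler-angle derivation of the right-multiplication by $R_{e_2}(\pi)$ (which determines the explicit form of $\mathcal{P}$ on Wigner functions) together with the careful use of the $d$-function reflection identity; the commutation with $B_l$ is then reduced to the geometric fact that $\tilde\delta$ lies in the invariant axis of $R_{e_2}(\pi)$.
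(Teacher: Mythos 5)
Your strategy is genuinely different from the paper's (the paper verifies selection rules by computing the matrix elements \eqref{pairingwang1}--\eqref{pairingwang2} in the rotated Wang basis): you construct a unitary involution $\mathcal{P}$, the right translation by the $C_2$ rotation about the dipole axis, commuting with $H$ and with every $B_l$, and take its eigenspaces. That idea is sound and more conceptual: for $\delta$ in the $1$--$2$ plane such a $\mathcal{P}$ does commute with $H+\sum_l u_lB_l$ for all $u$, so its eigenspaces are closed, orthogonal and invariant under every propagator, and your Euler-angle formula $(\alpha,\beta,\gamma)\mapsto(\alpha+\pi,\pi-\beta,\pi-\gamma)$ for right multiplication by $R_{e_2}(\pi)$ is correct.

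The gap is in the final identification of the eigenspaces of $\mathcal{P}$ with the two spans in the statement, and it is a genuine $(-1)^k$ issue, not a cosmetic one. First, your angle is not the one of the theorem: \eqref{thetachange} makes $e^{-\ii\theta}(\delta_2+\ii\delta_1)$ purely imaginary, i.e.\ it aligns the effective in-plane dipole with the \emph{first} body axis (the paper explicitly contrasts this with the choice of Appendix~\ref{appendixA}, which kills the imaginary part), while you align it with the second axis. The two choices differ by $\pi/2$, which multiplies the relative phase between $D^j_{k,m}$ and $D^j_{-k,m}$ in the Wang combination by $(-1)^k$, i.e.\ relabels $\gamma\mapsto\gamma+k\pmod 2$ and therefore exchanges ``parity of $j+\gamma+k$'' with ``parity of $j+\gamma$''. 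Second, the sign in $d^j_{k,m}(\pi-\beta)=(-1)^{j-m}d^j_{-k,m}(\beta)$, hence the eigenvalue $(-1)^{j+k}$ rather than $(-1)^{j}$ in $\mathcal{P}D^j_{k,m}=\pm D^j_{-k,m}$, depends on the phase convention of the $d$'s, and the paper's convention is fixed implicitly by the Gordy--Cook pairings \eqref{kk+1}, \eqref{jj}; a $k$-dependent phase (e.g.\ $\ii^{k}$) flips exactly the factor at stake. Indeed your two claims cannot both hold verbatim in the paper's conventions: with $\delta=(0,\delta_2,0)$, $\mathcal{P}D^j_{k,m}=(-1)^{j+k}D^j_{-k,m}$ and $[\mathcal{P},B_1]=0$ would force $\langle D^j_{k,m},\ii B_1D^{j+1}_{k+1,m+1}\rangle=\langle D^j_{-k,m},\ii B_1D^{j+1}_{-k-1,m+1}\rangle$, whereas \eqref{kk+1} gives $-c_{j,k,m}\delta_2$ and $+c_{j,k,m}\delta_2$. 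So, as written, your argument proves invariance of some Wang-type splitting, but not that the invariant spaces are $\overline{\mathrm{span}}\{S^j_{k,m,\gamma}(\theta)\}$, with $\theta$ as in \eqref{thetachange}, sorted by the parity of $j+\gamma+k$; to close the proof you must work in the paper's phase conventions (checking the reflection identity against \eqref{kk+1}--\eqref{jj}) and use the rotation dictated by \eqref{thetachange}, or equivalently define $\mathcal{P}$ as the right translation by the $\pi$-rotation about $\delta/|\delta|$ and compute its eigenvalue on the paper's rotated Wang states directly.
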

\begin{proof}
We need to prove that the pairings allowed by the controlled vector fields $B_1,B_2$ and $B_3$ conserve the parity of $j+\gamma+k$. To do so, let us compute
\begin{align}
\langle S_{k,m,\gamma}^j(\theta),\ii B_1 S_{k+1,m+1,\gamma}^{j+1}(\theta)\rangle&=-c_{j,k,m}e^{-\ii \theta}(\delta_2+\ii\delta_1)+c_{j,k,m}e^{\ii \theta}(\delta_2-\ii\delta_1)\nonumber\\ 
&=-2\ii c_{j,k,m}\sqrt{\delta_1^2+\delta_2^2},\label{pairingwang1} \\ 
\langle S_{k,m,\gamma}^j(\theta),\ii B_1 S_{k+1,m+1,\gamma'}^{j+1}(\theta)\rangle&=-c_{j,k,m}e^{-\ii \theta}(\delta_2+\ii\delta_1)-c_{j,k,m}e^{\ii \theta}(\delta_2-\ii\delta_1)\nonumber\\
 &=0, \qquad \gamma\neq \gamma',  \nonumber
\end{align}
having used the expression of the Wang functions as linear combinations of Wigner functions, the explicit pairings (\ref{rotatedwigner}) which can be found in Appendix~\ref{appendixA}, and the choice of $\theta$  made in (\ref{thetachange}). Then we also have 
\begin{equation}\label{pairingwang2}
\begin{cases}
\langle S_{k,m,\gamma}^j(\theta),\ii B_1 S_{k+1,m+1,\gamma}^{j}(\theta)\rangle=0, & \\
\langle S_{k,m,\gamma}^j(\theta),\ii B_1 S_{k+1,m+1,\gamma'}^{j}(\theta)\rangle=-2\ii h_{j,k,m} \sqrt{\delta_1^2+\delta_2^2}, \quad \gamma\neq\gamma', &
\end{cases}
\end{equation}
having used this time the pairings (\ref{jj}), which can be found in Appendix~\ref{appendixB}. From (\ref{pairingwang1}) and (\ref{pairingwang2}) we can see that the allowed transitions only depend on the parity of $j+\gamma$ and $k$; indeed, we have either transitions between states of the form
\[
\begin{cases}
j+\gamma & \text{even} \\
k & \text{even} \\
\end{cases}\longleftrightarrow \begin{cases}
j'+\gamma' & \text{odd} \\
k' & \text{odd}, 
\end{cases}
\]
or transitions between states of the form 
\[
\begin{cases}
j+\gamma & \text{even} \\
k & \text{odd} \\
\end{cases}\longleftrightarrow \begin{cases}
j'+\gamma' & \text{odd} \\
k' & \text{even}. 
\end{cases}
\]
The same happens if we replace $m+1$ with $m-1$ and $k+1$ with $k-1$ in \eqref{pairingwang1} and \eqref{pairingwang2}. Because of the selection rules (\ref{rules}), these are the only transitions allowed by the field $B_1$.
One can easily check, in the same way, that every transition induced by $B_2,B_3$ also conserves the parity of $j+\gamma+k$. \end{proof}

\appendix

\section{Proof of Proposition \ref{propA}}\label{appendixA}As a consequence of Lemma \ref{important}, part 1, if $I_2/I_3\notin \mathbb{Q}$, the only transitions driven by the fields $\ii B_l$, $l=1,2,3$, excited at frequency $\lambda_k^j$, are the ones corresponding to the following matrix elements (written in the basis of $\mathcal{M}_j$ given by the Wigner functions) and can be computed using, e.g., \cite[Table 2.1]{gordy}: 
\begin{equation}\label{kk+1}
\begin{cases}
\langle D_{k,m}^j, \ii B_1 D_{k+1,m\pm1}^{j+1} \rangle=-c_{j,k,\pm m}(\delta_2+\ii\delta_1),  & \\
\langle D_{k,m}^j , \ii B_1 D_{k-1,m\pm1}^{j+1} \rangle=c_{j,-k,\pm m}(\delta_2-\ii\delta_1), & \\
\langle D_{k,m}^j , \ii B_2 D_{k+1,m\pm1}^{j+1} \rangle= \mp \ii c_{j,k,\pm m} (\delta_2+\ii\delta_1),  & \\
\langle D_{k,m}^j , \ii B_2 D_{k-1,m\pm1}^{j+1} \rangle=\pm \ii c_{j,-k,\pm m}(\delta_2-\ii\delta_1), & \\
\langle D_{k,m}^j , \ii B_3 D_{k\pm1,m}^{j+1} \rangle=\pm \ii d_{j,\pm k,m}(\delta_2\pm\ii\delta_1), & 
\end{cases}
\end{equation}
where 
$$ c_{j,k,m}:=  \dfrac{ [(j+k+1)(j+k+2)]^{1/2}[(j+m+1)(j+m+2)]^{1/2}}{4(j+1)[(2j+1)(2j+3)]^{1/2}},$$
and
$$ d_{j,k,m}:=  \dfrac{ [(j+k+1)(j+k+2)]^{1/2} [(j+1)^2-m^2]^{1/2}}{2(j+1)[(2j+1)(2j+3)]^{1/2}}.$$

Now, using a symmetry argument, we explain how to get rid of one electric dipole component between $\delta_1$ and $\delta_2$.

By the very definition of the Euler angles, one has that the rotation of angle $\theta$ around the symmetry axis $a_3$ is given by $\gamma \mapsto \gamma+\theta.$ This rotation acts on the Wigner functions in the following way 
$$D_{k,m}^j(\alpha,\beta,\gamma)\mapsto D_{k,m}^j(\alpha,\beta,\gamma+\theta)=e^{\ii k\theta}D_{k,m}^j(\alpha,\beta,\gamma)=:D_{k,m}^j(\theta)(\alpha,\beta,\gamma),$$
having used the explicit expression of the symmetric states (\ref{explicit}). Note that these rotated Wigner functions form again an orthogonal basis for $L^2({\rm SO}(3))$ of eigenfunctions of the rotational Hamiltonian $H$, so we can also analyze the controllability problem in this new basis. 
In this new basis the matrix elements (corresponding to the frequency $\lambda_k^j$) of the controlled fields are
\begin{equation}\label{rotatedwigner}
\begin{cases}
\langle D_{k,m}^j(\theta) , \ii B_1 D_{k+1,m+1}^{j+1}(\theta) \rangle&= -c_{j,k,m}e^{-\ii \theta}(\delta_2+\ii\delta_1),\\
\langle D_{k,m}^j(\theta) , \ii B_1 D_{k-1,m+1}^{j+1}(\theta) \rangle&= c_{j,-k,m} e^{\ii \theta} (\delta_2-\ii\delta_1),
\end{cases}
\end{equation}
and the same happens for all the other transitions described in (\ref{kk+1}). So, the effect of this change of basis is that we are actually rotating the first two components of the dipole moment, by the angle $\theta$. We can now choose $\theta\in [0,2\pi)$ such that 
 $$e^{-\ii \theta}(\delta_2+\ii\delta_1)=\sqrt{\delta_1^2+\delta_2^2}=e^{\ii \theta} (\delta_2-\ii\delta_1). $$
In other words, thanks to this change of basis, we can assume without loss of generality that $\delta_1=0$, since we can rotate the vector $\delta_2\pm \ii\delta_1$ and get rid of its imaginary part (note that in \eqref{thetachange} and in the proof of Theorem \ref{accidentally} we are rotating the vector $\delta_2\pm \ii\delta_1$ in the other sense, i.e., to get rid of its real part). This will simplify the expression of the controlled fields. Note that 
\begin{align*}
&W_\ii(G_{(j,k,m),(j+1,k+1,n)})=-F_{(j,k,m),(j+1,k+1,n)}, \\ & W_\ii(F_{(j,k,m),(j+1,k+1,n)})=G_{(j,k,m),(j+1,k+1,n)}.
\end{align*}
From the identity $[e_{j,k},e_{n,m}]=\delta_{kn}e_{j,m}-\delta_{jm}e_{n,k}$ we get the 
bracket relations
 $$[G_{j,k},G_{k,n}]=G_{j,n}, \quad [F_{j,k},F_{k,n}]=-G_{j,n}, \quad [G_{j,k},F_{k,n}]=F_{j,n},$$
$$ [G_{j,k},F_{j,k}]=2D_{j,k}, \quad [F_{j,k},D_{j,k}]=2G_{j,k}.$$
Moreover, 
two operators coupling no common states commute, that is,
$$ [Y_{j,k},Z_{j',k'}]=0 \qquad \text{ if } \{j,k\}\cap  \{j',k'\}=\emptyset ,$$
with $Y,Z\in \{G,F,D\}$.

Finally, 
we can conveniently represent the matrices corresponding to the controlled vector field (projected onto $\mathcal{M}_j$) in the rotated basis found with the symmetry argument. So, for each $k=-j,\dots,j$, because of Lemma \ref{important}, part 1, and \eqref{kk+1}, we have
\begin{align}\nonumber
\mathcal{E}_{\lambda_k^j}(\ii B_1)=&\sum_{m=-j,\dots,j}  -c_{j,k,m}\delta_2 G_{(j,k,m),(j+1,k+1,m+1)}-c_{j,k,-m}\delta_2 G_{(j,k,m),(j+1,k+1,m-1)} \\ 
\label{changeofbasis1}
& +c_{j,k,m}\delta_2 G_{(j,-k,m),(j+1,-k-1,m+1)}+c_{j,k,-m}\delta_2 G_{(j,-k,m),(j+1,-k-1,m-1)},
\end{align}
\begin{align}\nonumber
\mathcal{E}_{\lambda_k^j}(\ii B_2)=&\sum_{m=-j,\dots,j}-c_{j,k,m}\delta_2F_{(j,k,m),(j+1,k+1,m+1)}+c_{j,k,-m}\delta_2F_{(j,k,m),(j+1,k+1,m-1)} \\ 
\label{changeofbasis2}
&+c_{j,k,m}\delta_2F_{(j,-k,m),(j+1,-k-1,m+1)}-c_{j,k,-m}\delta_2F_{(j,-k,m),(j+1,-k-1,m-1)},
\end{align}
\begin{align}\label{changeofbasis3}
\mathcal{E}_{\lambda_k^j}(\ii B_3)&=\sum_{m=-j,\dots,j}d_{j,k,m}\delta_2F_{(j,k,m),(j+1,k+1,m)} -d_{j,k,m}\delta_2F_{(j,-k,m),(j+1,-k-1,m)},
\end{align}
where, with a slight abuse of notation, we write $B_l$ instead of $\Pi_{\mathcal{M}_j}B_l\Pi_{\mathcal{M}_j}$.

Now we show how the sum over $m$ in (\ref{changeofbasis1}), (\ref{changeofbasis2}) and (\ref{changeofbasis3}) can be decomposed, in order to obtain that 
the 
matrices $X_{(j,k,m),(j+1,k+1,m\pm1)}+X_{(j,-k,m),(j+1,-k-1,m\pm1)}$ and $X_{(j,k,m),(j+1,k+1,m)}-X_{(j,-k,m),(j+1,-k-1,m)}$ are in $\mathrm{L}_j$, for any $m,k=-j,\dots,j$, where $X\in\{G,F\}$.
Let us first fix $k\neq 0$ and consider 
 \begin{align*}
 W_\ii(\mathcal{E}_{\lambda_k^j}&(\ii B_3))\\
 &=\sum_{m=-j,\dots,j}d_{j,k,m}\delta_2G_{(j,k,m),(j+1,k+1,m)} -d_{j,k,m}\delta_2G_{(j,-k,m),(j+1,-k-1,m)},
 \end{align*}
  and the brackets 
\begin{align*}
\mathrm{ad}_{\mathcal{E}_{\lambda_k^j}(\ii B_3)}^{2s}(W_\ii(\mathcal{E}_{\lambda_k^j}(\ii B_3)))=&\sum_{m=-j,\dots,j}(-1)^s2^{2s}d_{j,k,m}^{2s+1}\delta_2^{2s+1}G_{(j,k,m),(j+1,k+1,m)}\\ &+(-1)^s2^{2s}(-d_{j,k,m})^{2s+1}\delta_2^{2s+1}G_{(j,-k,m),(j+1,-k-1,m)},
\end{align*}
for $s\in \mathbb{N}$, where $\mathrm{ad}_A(B)=[A,B]$ and $\mathrm{ad}_A^{n+1}(B)=[A,\mathrm{ad}_A^n(B)]$.  Since $d_{j,k,m}=d_{j,k,-m}$, the invertibility of the Vandermonde matrix gives that 
\begin{align}\nonumber
&G_{(j,k,m),(j+1,k+1,m)}+G_{(j,k,-m),(j+1,k+1,-m)} \\ -&G_{(j,-k,m),(j+1,-k-1,m)}-G_{(j,-k,-m),(j+1,-k-1,-m)} \in \mathrm{L}_j,\label{vandermonde}
\end{align}
for $m=0,\dots,j$. In particular, $G_{(j,k,0),(j+1,k+1,0)}-G_{(j,-k,0),(j+1,-k-1,0)}$ is in $\mathrm{L}_j$. Hence, 
\begin{align}
\label{replace}& \Big[\Big[\dfrac{\mathcal{E}_{\lambda_k}(\ii B_1)-W_\ii(\mathcal{E}_{\lambda_k}(\ii B_2))}{2},G_{(j,k,0),(j+1,k+1,0)}-G_{(j,-k,0),(j+1,-k-1,0)}\Big],\\ 
\nonumber
&G_{(j,k,0),(j+1,k+1,0)}-G_{(j,-k,0),(j+1,-k-1,0)}\Big]=c_{j,k,0}\delta_2G_{(j,k,0),(j+1,k+1,-1)}\\ & +\nonumber c_{j,k,-1}\delta_2G_{(j,k,1),(j+1,k+1,0)}-c_{j,k,0}\delta_2G_{(j,-k,0),(j+1,-k-1,-1)}\\ &-c_{j,k,-1}\delta_2G_{(j,-k,1),(j+1,-k-1,0)} 
\nonumber 
\end{align}
is also in $\mathrm{L}_j$. Define 
\begin{align*}
Q_0=&c_{j,k,0}\delta_2G_{(j,k,0),(j+1,k+1,-1)}+c_{j,k,-1}\delta_2G_{(j,k,1),(j+1,k+1,0)}\\
&-c_{j,k,0}\delta_2G_{(j,-k,0),(j+1,-k-1,-1)}-c_{j,k,-1}\delta_2G_{(j,-k,1),(j+1,-k-1,0)}, \\
Q_m=&c_{j,k,-m}\delta_2G_{(j,k,-m),(j+1,k+1,-m-1)}+c_{j,k,-m-1}\delta_2G_{(j,k,m+1),(j+1,k+1,m)}\\
&-c_{j,k,-m}\delta_2G_{(j,-k,-m),(j+1,-k-1,-m-1)}-c_{j,k,-m-1}\delta_2G_{(j,-k,m+1),(j+1,-k-1,m)} ,
\end{align*}
if $ 0<m<j$, and 
\[Q_j=c_{j,k,-j}\delta_2G_{(j,k,-j),(j+1,k+1,-j-1)}-c_{j,k,-j}\delta_2G_{(j,-k,-j),(j+1,-k-1,-j-1)}.\]

We have 
\begin{align*}
&\Big[\Big[\sum_{m=s,\dots,j} Q_{m},G_{(j,k,s),(j+1,k+1,s)}+G_{(j,k,-s),(j+1,k+1,-s)}-G_{(j,-k,s),(j+1,-k-1,s)}\\ &-G_{(j,-k,-s),(j+1,-k-1,-s)}\Big] 
,G_{(j,k,s),(j+1,k+1,s)}+G_{(j,k,-s),(j+1,k+1,-s)}\\ & -G_{(j,-k,s),(j+1,-k-1,s)}-G_{(j,-k,-s),(j+1,-k-1,-s)}\Big]=Q_s,
\end{align*}
for $s=1,\dots,j$. By iteration on $s$ and because of (\ref{vandermonde}), it follows that $Q_s\in \mathrm{L}_j$ for every $s=0,\dots,j$. Now, since 
\[
\dfrac{Q_j}{c_{j,k,-j}\delta_2}=G_{(j,k,-j),(j+1,k+1,-j-1)}-G_{(j,-k,-j),(j+1,-k-1,-j-1)}\in \mathrm{L}_j,\] then
\begin{align*}
&\mathrm{ad}_{G_{(j,k,-j),(j+1,k+1,-j-1)}-G_{(j,-k,-j),(j+1,-k-1,-j-1)}}^2(G_{(j,k,j),(j+1,k+1,j)}\\ &+G_{(j,k,-j),(j+1,k+1,-j)} 
-G_{(j,-k,j),(j+1,-k-1,j)}-G_{(j,-k,-j),(j+1,-k-1,-j)})\\ & =G_{(j,k,-j),(j+1,k+1,-j)}-G_{(j,-k,-j),(j+1,-k-1,-j)}\in \mathrm{L}_j,
\end{align*} 
which, in turns, implies that 
\begin{align*}
&\mathrm{ad}_{G_{(j,k,-j),(j+1,k+1,-j)}-G_{(j,-k,-j),(j+1,-k-1,-j)}}^2(Q_{j-1}) \\ & = c_{j,k,-j+1}G_{(j,k,-j+1),(j+1,k+1,-j)}-c_{j,k,-j+1}G_{(j,-k,-j+1),(j+1,-k-1,-j)}\in \mathrm{L}_j.
\end{align*}

Iterating the argument, 
\begin{equation}\label{example}
G_{(j,k,m),(j+1,k+1,m)}-G_{(j,-k,m),(j+1,-k-1,m)} \in \mathrm{L}_j, \quad m=-j,\dots,j
\end{equation}
and $G_{(j,k,m),(j+1,k+1,m-1)}-G_{(j,-k,m),(j+1,-k-1,m-1)}$ are in $\mathrm{L}_j$ for $m=-j,\dots,j$.

By the same argument as above, with 
$\frac{\mathcal{E}_{\lambda_k^j}(\ii B_1)-W_\ii(\mathcal{E}_{\lambda_k^j}(\ii B_2))}{2}$
replaced by 
\begin{align*}
\dfrac{\mathcal{E}_{\lambda_k^j}(\ii B_1)+W_\ii(\mathcal{E}_{\lambda_k^j}(\ii B_2))}{2}=&\sum_{m=-j,\dots,j}-c_{j,k,m}G_{(j,k,m),(j+1,k+1,m+1)}\\ &+c_{j,k,m}G_{(j,-k,m),(j+1,-k-1,m+1)}
\end{align*}
in 
(\ref{replace}) 
we also have that $G_{(j,k,m),(j+1,k+1,m+1)}-G_{(j,-k,m),(j+1,-k-1,m+1)}$ is in $\mathrm{L}_j$ for all $m=-j,\dots,j$.

 If we now replace 
$ \frac{\mathcal{E}_{\lambda_k^j}(\ii B_1)-W_\ii(\mathcal{E}_{\lambda_k^j}(\ii B_2))}{2}$
with 
\begin{align*}
\dfrac{\mathcal{E}_{\lambda_k^j}(\ii B_2)+W_\ii(\mathcal{E}_{\lambda_k^j}(\ii B_1))}{2}=&\sum_{m=-j,\dots,j}-c_{j,k,-m}F_{(j,k,m),(j+1,k+1,m-1)}\\ &+c_{j,k,-m}F_{(j,-k,m),(j+1,-k-1,m-1)}
\end{align*}
 or 
\begin{align*}
\dfrac{\mathcal{E}_{\lambda_k^j}(\ii B_2)-W_\ii(\mathcal{E}_{\lambda_k^j}(\ii B_1))}{2}=&\sum_{m=-j,\dots,j}-c_{j,k,m}F_{(j,k,m),(j+1,k+1,m+1)}\\ &+c_{j,k,m}F_{(j,-k,m),(j+1,-k-1,m+1)},
\end{align*}
the arguments above prove that  
both $F_{(j,k,m),(j+1,k+1,m)}-F_{(j,-k,m),(j+1,-k-1,m)}$ and
$ 
F_{(j,k,m),(j+1,k+1,m\pm1)}-F_{(j,-k,m),(j+1,-k-1,m\pm1)}$
 are in $\mathrm{L}_j$ for all $m=-j,\dots,j$.

\section{Proof of Proposition \ref{su}}\label{appendixB}
Using again \cite[Table 2.1]{gordy} we write the pairings
\begin{equation}\label{jj}
\begin{cases}
\langle D_{k,m}^j , \ii B_1D_{k+1,m\pm1}^{j} \rangle=\mp h_{j,k,\pm m}(\delta_2+\ii \delta_1), & \\
\langle D_{k,m}^j , \ii B_1D_{k-1,m\pm 1}^{j} \rangle=\mp h_{j,-k,\pm m}(\delta_2-\ii \delta_1), & \\
\langle D_{k,m}^j , \ii B_2D_{k+1,m\pm1}^{j} \rangle= -\ii h_{j,k,\pm m}(\delta_2+\ii \delta_1), & \\
\langle D_{k,m}^j , \ii B_2D_{k-1,m\pm1}^{j} \rangle=-\ii h_{j,-k,\pm m}(\delta_2- \ii \delta_1),  & \\
\langle D_{k,m}^j , \ii B_3D_{k\pm1,m}^{j} \rangle=-\ii q_{j,\pm k,m}(\delta_2 \pm \ii \delta_1), & 
\end{cases}
\end{equation}
where 
\begin{align*}
 h_{j,k,m}&:=  \dfrac{[j(j+1)-k(k+1)]^{1/2}[j(j+1)-m(m+1)]^{1/2}}{4j(j+1)}, \\
q_{j,k,m}&:=  \dfrac{[j(j+1)-k(k+1)]^{1/2}m}{2j(j+1)}.
\end{align*}
Moreover,
\begin{equation}\label{kk}
\begin{cases}
\langle D_{k,m}^j , \ii B_1D_{k,m\pm 1}^{j+1} \rangle=a_{j,k,\pm m}\delta_3, & \\
\langle D_{k,m}^j , \ii B_2D_{k,m\pm 1}^{j+1}\rangle=\pm \ii a_{j,k,\pm m}\delta_3, & \\
\langle D_{k,m}^j , \ii B_3D_{k,m}^{j+1} \rangle=-\ii b_{j,k,m}\delta_3, & 
\end{cases}
\end{equation}
where
\begin{align*}
a_{j,k,m}&:= \dfrac{[(j+1)^2-k^2]^{1/2}[(j+m+1)(j+m+2)]^{1/2}}{2(j+1)[(2j+1)(2j+3)]^{1/2}},\\
b_{j,k,m}&:=  \dfrac{[(j+1)^2-k^2]^{1/2}[(j+1)^2-m^2]^{1/2}}{(j+1)[(2j+1)(2j+3)]^{1/2}}.
\end{align*}

Note that the $k\rightarrow k$ transitions are driven by $\delta_3$. 
Recall that, up to a rotation, we can 
assume that $\delta_1=0$. Because of Lemma \ref{important}, parts 2 and 3, the expression of the controlled fields excited at the frequencies $\eta_k$ and $\sigma^j$ are
\begin{align}\nonumber
\mathcal{E}_{\eta_k}(\ii B_1)=&\sum_{\substack{l=j,j+1,\\ m=-l,\dots,l-1}}-h_{l,k,m}\delta_2G_{(l,k,m),(l,k+1,m+1)} -h_{l,k,m}\delta_2G_{(l,-k,m),(l,-k-1,m+1)} \\ 
\label{eta2}& 
\hspace{-1mm}+\hspace{-2mm}\sum_{\substack{l=j,j+1,\\ m=-l+1,\dots,l}} h_{l,k,-m}\delta_2G_{(l,k,m),(l,k+1,m-1)}+h_{l,k,-m}\delta_2G_{(l,-k,m),(l,-k-1,m-1)} ,
\end{align}
\begin{align}\nonumber
\mathcal{E}_{\eta_k}(\ii B_2)&=\sum_{\substack{l=j,j+1,\\m=-l,\dots,l-1}}-h_{l,k,m}\delta_2F_{(l,k,m),(l,k+1,m+1)} -h_{l,k,m}\delta_2F_{(l,-k,m),(l,-k-1,m+1)}  \\ & 
+\hspace{-1mm}\sum_{\substack{l=j,j+1,\\m=-l+1,\dots,l}}-h_{l,k,-m}\delta_2F_{(l,k,m),(l,k+1,m-1)}-h_{l,k,-m}\delta_2F_{(l,-k,m),(l,-k-1,m-1)} ,
\label{eta3}
\end{align}
\begin{equation}\label{eta4}
\mathcal{E}_{\eta_k}(\ii B_3)=\sum_{\substack{l=j,j+1,\\m=-l,\dots,l}}-q_{l,k,m}\delta_2F_{(l,k,m),(l,k+1,m)} -q_{l,k,m}\delta_2F_{(l,-k,m),(l,-k-1,m)},
\end{equation}
and
\begin{align}\label{first}
&\mathcal{E}_{\sigma^j}(\ii B_1)=\sum_{m,k=-j,\dots,j}a_{j,k,m}\delta_3G_{(j,k,m),(j+1,k,m+1)}+a_{j,k,-m}\delta_3G_{(j,k,m),(j+1,k,m-1)}, \\
&\mathcal{E}_{\sigma^j}(\ii B_2)=\sum_{m,k=-j,\dots,j}a_{j,k,m}\delta_3F_{(j,k,m),(j+1,k,m+1)}-a_{j,k,-m}\delta_3F_{(j,k,m),(j+1,k,m-1)},\nonumber \\
&\mathcal{E}_{\sigma^j}(\ii B_3)=\sum_{m,k=-j,\dots,j}-b_{j,k,m}\delta_3F_{(j,k,m),(j+1,k,m)} .\nonumber
\end{align}
Note that in $\mathcal{E}_{\eta_k}(-iB_3)$ the term for $m=0$ vanishes, since $q_{j,k,0}=0$ for every $j,k$.

To decouple all the $m$-degeneracies in the excited modes, we just consider double brackets with the elementary matrices that we have obtained above. As an example, using \eqref{example} we can decouple the $m\rightarrow m$ transitions corresponding to the frequency $\sigma^j$ by considering 
\begin{align*}
&[[W_\ii(\mathcal{E}_{\sigma^j}(\ii B_3)),G_{(j,k,m),(j+1,k+1,m)}-G_{(j,-k,m),(j+1,-k-1,m)}],  \\ &
G_{(j,k,m),(j+1,k+1,m)}-G_{(j,-k,m),(j+1,-k-1,m)}]\\ &=G_{(j,k,m),(j+1,k,m)}+G_{(j,-k,m),(j+1,-k,m)} \in \mathrm{Lie}(\widetilde{\mathcal{P}}_j).
\end{align*}
Considering every possible double brackets as above, we obtain, for $X \in \{G,F \}$, that
\begin{equation}\label{example2}
X_{(j,k,m),(j+1,k,m)}+X_{(j,-k,m),(j+1,-k,m)}\in \mathrm{Lie}(\widetilde{\mathcal{P}}_j), \quad k\neq 0,
\end{equation}
when we start from the matrices in \eqref{first}, and that
\[X_{(l,k,m),(l,k+1,m)}+X_{(l,-k,m),(l,-k-1,m)},X_{(l,k,m),(l,k+1,m\pm1)}+X_{(l,-k,m),(j,-k-1,m\pm1)}\]
are in $\mathrm{Lie}(\widetilde{\mathcal{P}}_j)$,  $l=j,j+1$, $m,k\neq 0$,  when we start from the matrices in \eqref{eta2},  \eqref{eta3}, \eqref{eta4}. 
Now we can also generate the missing $k=0$ elements of \eqref{example} by taking double brackets with $X_{(j+1,1,m),(j+1,2,m)}+X_{(j+1,-1,m),(j+1,-2,m)}\in \mathrm{Lie}(\widetilde{\mathcal{P}}_j)$. As an example, we have that
\begin{align*}
&[[\mathcal{E}_{\lambda_0^j}(\ii B_3),F_{(j+1,1,m),(j+1,2,m)}+F_{(j+1,-1,m),(j+1,-2,m)}],  \\ &
F_{(j+1,1,m),(j+1,2,m)}+F_{(j+1,-1,m),(j+1,-2,m)}]\\ &=F_{(j,0,m),(j+1,1,m)}-F_{(j,0,m),(j+1,-1,m)}\in \mathrm{Lie}(\widetilde{\mathcal{P}}_j). 
\end{align*}
Moreover, also the $m=0$ elements in the transitions \eqref{eta4} are in $\mathrm{Lie}(\widetilde{\mathcal{P}}_j)$, as one can check by considering a bracket between two transitions obtained in \eqref{example} and \eqref{example2}. For example, 
\begin{align*}
&[G_{(j,k,0),(j+1,k+1,0)}-G_{(j,-k,0),(j+1,-k-1,0)},G_{(j+1,k+1,0),(j,k+1,0)}\\ &+G_{(j+1,-k-1,0),(j,-k-1,0)}]  
= G_{(j,k,0),(j,k+1,0)}-G_{(j,-k,0),(j,-k-1,0)}\in \mathrm{Lie}(\widetilde{\mathcal{P}}_j).
\end{align*}
Finally, we apply a three-wave mixing argument (Figure \ref{wave}) in order to decouple the sum over $k$ and $-k$ in every elementary matrices: consider the bracket between the following elements in $\mathrm{Lie}(\widetilde{\mathcal{P}}_j)$
\begin{figure}[ht!]\begin{center}
\includegraphics[width=0.6\linewidth, draft = false]{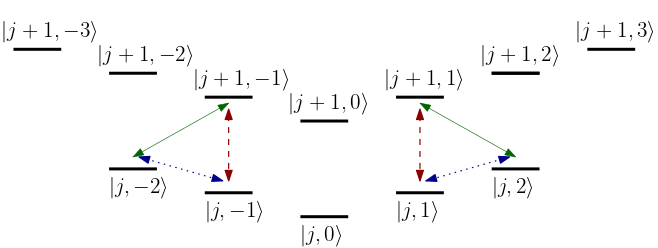}
\caption{Three-wave mixing around $k=1,-1$. The same-shaped arrows correspond to equal spectral gaps, and thus, coupled transitions. The goal of the three-wave mixing is to decouple those arrows.}\end{center} \label{wave}
\end{figure}
\begin{align*}
&[G_{(j,k+1,m),(j,k,m)}+G_{(j,-k-1,m),(j,-k,m)},G_{(j,k,m),(j+1,k,m)}+G_{(j,-k,m),(j+1,-k,m)}]  \\ &
= G_{(j,k+1,m),(j+1,k,m)}+G_{(j,-k-1,m),(j+1,-k,m)} \in \mathrm{Lie}(\widetilde{\mathcal{P}}_j), \quad k\neq0,
\end{align*}
and notice that from \eqref{example} we already have that 
$G_{(j,k+1,m),(j+1,k,m)}-G_{(j,-k-1,m),(j+1,-k,m)}$
is in $\mathrm{Lie}(\widetilde{\mathcal{P}}_j)$, and hence $G_{(j,k+1,m),(j+1,k,m)}$ and $G_{(j,-k-1,m),(j+1,-k,m)}$ are in $\mathrm{Lie}(\widetilde{\mathcal{P}}_j)$. In this way we can break every $k$-degeneracy, and finally obtain that $\mathrm{Lie}(\widetilde{\mathcal{P}}_j)=\mathfrak{su}(\mathcal{M}_j)$, which concludes the proof.

\bibliographystyle{siamplain}
\bibliography{references}

\end{document}